\newcommand{\OPT}{\text{OPT}}
\newcommand{\TOP}{\text{TOP}}
\newcommand{\BOT}{\text{BOT}}
\newcommand{\MIN}{\text{MIN}}
\begin{document}
\title{Approximation algorithms and an integer program for multi-level graph spanners} %\thanks{Supported by organization x.}}
\titlerunning{Approximation algorithms and an ILP for multi-level graph spanners}
% If the paper title is too long for the running head, you can set
% an abbreviated paper title here
%

%use macro orcidID{...} for orc ID
\author{Reyan Ahmed\inst{1}\and
Keaton Hamm\inst{1} \and
Mohammad Javad Latifi Jebelli\inst{1} \and
Stephen Kobourov\inst{1} \and
Faryad Darabi Sahneh\inst{1} \and
Richard Spence\inst{1}}
\authorrunning{Ahmed et al.}
% First names are abbreviated in the running head.
% If there are more than two authors, 'et al.' is used.
%
\institute{University of Arizona, USA}
\maketitle              % typeset the header of the contribution
\begin{abstract}
Given a weighted graph $G(V,E)$ and $t \ge 1$, a subgraph $H$ is a \emph{$t$--spanner} of $G$ if the lengths of shortest paths in $G$ are preserved in $H$ up to a multiplicative factor of $t$. The \emph{subsetwise spanner} problem aims to preserve distances in $G$ for only a subset of the vertices. We generalize the minimum-cost subsetwise spanner problem to one where vertices appear on multiple levels, which we call the \emph{multi-level graph spanner} (MLGS) problem, and describe two simple heuristics. Applications of this problem include road/network building and multi-level graph visualization, especially where vertices may require different grades of service.

We formulate a 0--1 integer linear program (ILP) of size $O(|E||V|^2)$ for the more general minimum \emph{pairwise spanner problem}, which resolves an open question by Sigurd and Zachariasen on whether this problem admits a useful polynomial-size ILP. We extend this ILP formulation to the MLGS problem, and evaluate the heuristic and ILP performance on random graphs of up to 100 vertices and 500 edges.

\keywords{Graph spanners \and Integer programming \and Multi-level graph representation}
\end{abstract}

\section{Introduction}

Given an undirected edge-weighted graph $G(V,E)$ and a real number $t \ge 1$, a subgraph $H(V,E')$ is a (multiplicative) \emph{$t$--spanner} of $G$ if the lengths of shortest paths in $G$ are preserved in $H$ up to a multiplicative factor of $t$; that is, $d_H(u,v) \le t \cdot d_G(u,v)$ for all $(u,v) \in V\times V$, where $d_G(u,v)$ is the length of the shortest path from $u$ to $v$ in $G$. We refer to $t$ as the \textit{stretch factor} of $H$. Peleg et al.~\cite{doi:10.1002/jgt.3190130114} show that determining if there exists a $t$--spanner of $G$ with $m$ or fewer edges is NP--complete. Further, it is NP--hard to approximate the (unweighted) $t$--spanner problem to within a factor of $O(\log |V|)$, even when restricted to bipartite graphs~\cite{Kortsarz2001}.

\begin{comment}
\begin{algorithm}
\caption{\textsc{GreedySpanner}($G$, $t$); Greedy $t$--spanner Algorithm }\label{ALG:Greedy t spanner}
\begin{algorithmic}
\State Sort edges in non-decreasing order
\State $E' = \emptyset$
\For{$(u,v)\in E$}
\State Compute $d_{G'}(u,v)$ for $G'=(V,E')$
\If{$t\cdot w_{(u,v)} < d_{G'}(u,v)$}
\State $E'\gets E'\cup\{(u,v)\}$
\EndIf
\EndFor
\State \textbf{return} $G'=(V,E')$
%\EndProcedure
\end{algorithmic}
\end{algorithm}
\end{comment}

%A more general version of the $t$--spanner problem is 
In the \emph{pairwise spanner} problem \cite{Cygan13}, distances only need to be preserved for a subset $\mathcal{P} \subseteq V \times V$ of pairs of vertices. Thus, the classical $t$--spanner problem is a special case of the pairwise spanner problem where $\mathcal{P} = V \times V$. The \emph{subsetwise spanner} problem is a special case of the pairwise spanner problem where $\mathcal{P} = S \times S$ for some $S \subset V$; that is, distances need only be preserved between vertices in $S$ \cite{Cygan13}.  The case $t=1$ is known as the \emph{pairwise distance preserver} or \emph{sourcewise distance preserver} problem, respectively\cite{coppersmith2006sparse}. The subsetwise spanner problem where $t$ is arbitrarily large is known as the \emph{Steiner tree} problem on graphs.

\subsection{Multi-level graph spanners}

%\todo[inline]{FS: I believe we are dealing with two very different problems.\\
%1- Consistent (or nested?) spanners: Can we find a parametric graph spanner, with parameter $t$ between two design values of $t_1$ and $t_2$, such that $G_t\subset G_{t'}$ if $t>t'$?\\
%2- Multilevel spanners: Which we are interested in a collection of nested subsets. The stretch factor can be increasing or it can stay constant.\\
%A different approach would be to define the general problem, and then investigate the two special cases (constant subsets and constant stretch factors) separately. Hopefully, this will help with the analysis?}
%Reyan and I agreed to keep the stretch factor the same across all levels, to simplify the problem a bit - RCS

In many network design problems, vertices or edges come with a natural notion of priority, grade of service, or level; see Fig.~\ref{fig:map_vs_spanner}. For example, consider the case of rebuilding a transportation infrastructure network after a natural disaster.  Following such an event, the rebuilding process may wish to prioritize connections between important buildings such as hospitals or distribution centers, making these higher level terminals, while ensuring that no person must travel an excessive distance to reach their destination. Such problems have been referred to by names such as hierarchical network design, grade of service problems, multi-level, multi-tier, and have applications in network routing and visualization.

\begin{figure}[t]
\centering
\vspace{-.4cm}\includegraphics[width=\textwidth]{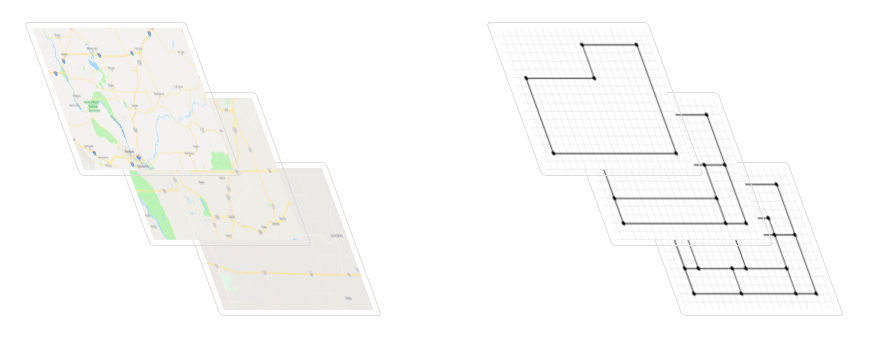}\vspace{-.2cm}
\caption{An interactive road map serves as a good analogy for the MLGS problem, where the top level graph $G_{\ell}$  represents the network of major highways, and zooming in to $G_{\ell-1}$ shows a denser network of smaller roads.
\label{fig:map_vs_spanner}}
\end{figure}

Similar to other graph problems which generalize to multiple levels or grades of service \cite{1288137}, we extend the subsetwise spanner problem to the \textit{multi-level graph spanner (MLGS)} problem: 
\begin{definition}\label{DEF:MLGS}[Multi-level graph spanner (MLGS) problem]
Given a graph $G(V,E)$ with positive edge weights $c: E \to \mathbb{R}_{+}$, a nested sequence of \textit{terminals}, $T_{\ell} \subseteq T_{\ell-1} \subseteq \ldots \subseteq T_1 \subseteq V$, and a real number $t \ge 1$, compute a minimum-cost sequence of spanners $G_{\ell} \subseteq G_{\ell-1} \subseteq \ldots \subseteq G_1$, where $G_i$ is a subsetwise $(T_i \times T_i)$--spanner for $G$ with stretch factor $t$ for $i=1,\dots,\ell$. The cost of a solution is defined as the sum of the edge weights on each graph $G_i$, i.e., $\sum_{i=1}^{\ell} \sum_{e \in E(G_i)}c_e$.
\end{definition}
%\todo{[K] I still propose that we use the terminology $G_i$ is a $(t,T_i\times T_i)$--spanner of/for/in $G$ to make things cleaner to state.}
%(T \times T)-spanner is fairly common in literature - it should just be enough to say that, with stretch factor t if needed

We refer to $T_i$ and $G_i$ as the terminals and the graph on level $i$. A more general version of the MLGS problem can involve different stretch factors on each level or a more general definition of cost, but for now we use the same stretch factor $t$ for each level.

An equivalent formulation of the MLGS problem which we use interchangeably involves \emph{grades of service}: given $G = (V,E)$ with edge weights, and required grades of service $R: V \to \{0,1,\ldots,\ell\}$, compute a single subgraph $H \subseteq G$ with varying grades of service on the edges, with the property that for all $u,v \in V$, if $u$ and $v$ each have required a grade of service greater than or equal to $i$, then there exists a path in $H$ from $u$ to $v$ using edges with a grade of service greater than or equal to $i$, and whose length is at most $t \cdot d_G(u,v)$. Thus, $T_{\ell} = \{v \in V \mid R(v) = \ell\}$, $T_{\ell-1} = \{v \in V \mid R(v) \ge \ell-1\}$, and so on. If $y_e$ denotes the grade of edge $e$ (or the number of levels $e$ appears in), then the cost of a solution is equivalently $\sum_{e \in H} c_e y_e$, that is, edges with a higher grade of service incur a greater cost. This interpretation makes it clear that more important vertices (e.g., hubs) are connected with higher quality edges; see example instance and solution in  Fig.~\ref{fig:mlgs-example}.

\begin{figure}[t]
\centering
\vspace{-.2cm}\includegraphics[width=\textwidth]{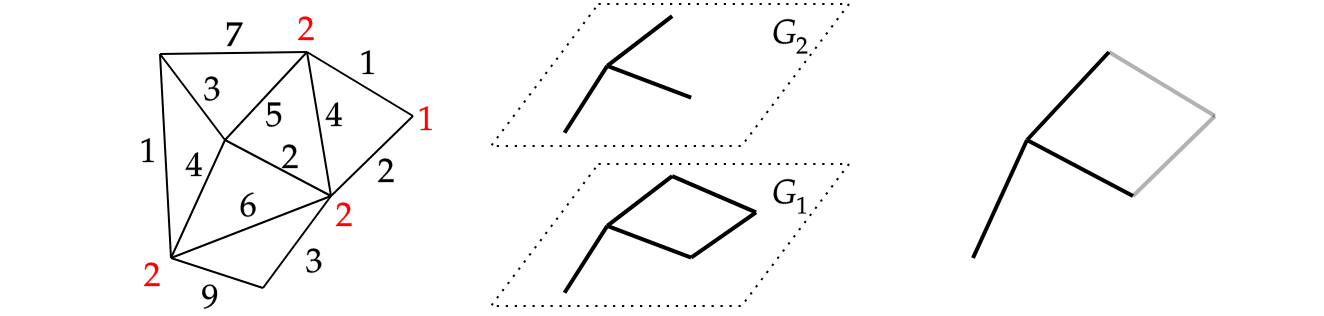}
\caption{\emph{Left:} Input graph $G$ with edge weights, $\ell = 2$, $|T_2| = 4$, $|T_1| = 3$, and $t = 3$. Required grades of service $R(v)$ are shown in red. \emph{Center:} A valid MLGS $G_2 \subseteq G_1 \subseteq G$ is shown. \emph{Right:} The equivalent solution, where dark edges $e$ have $y_e = 2$ and light edges have $y_e = 1$. The cost of this solution is $2 \times (4+2+5) + 1 \times (1+2) = 25$.}
\label{fig:mlgs-example}
\end{figure}

If $t$ is arbitrarily large, the MLGS problem reduces to the \emph{multi-level Steiner tree} (MLST) problem\cite{MLST2018}. However it is worth noting that the problem of computing or approximating spanners is significantly harder than that of computing Steiner trees, and that a Steiner tree of $G$ may be an arbitrarily poor spanner; a cycle on $|V|$ vertices with one edge removed is a possible Steiner tree of $G$, but is only a $(|V|-1)$-spanner of $G$.  The techniques used here have similarities to those used in the MLST problem, but more sophisticated methods are needed as well, including the use of approximate distance preservers and a new ILP formulation for the pairwise spanner problem.

%An immediate approach to compute a multi-level spanner of a graph $G$ is to run the algorithm mentioned above $\ell$ times independently with stretch factors $t_1, t_2, \hdots , t_l$ where every time the input graph of the algorithm is $G$. However, the spanners generated by this approach may not have the property that the spanners are nested, as shown by example in Figure~\ref{fig:Independent-inconsistent-spanners}.
%\todo[inline]{this example uses different stretch factors; replace it with a MLGS example?

%I still think the priority/rate of service interpretation is more intuitive than levels, even more so for spanners - RCS}
%\todo[inline]{add MLGS example with picture (maybe showing both levels and priorities side-by-side?) - RCS}

%\begin{figure}
%\centering
%\includegraphics[width=0.9\textwidth]{figures/independent_inconsistent_spanners.pdf}
%\caption{Independent inconsistent spanners}
%\label{fig:Independent-inconsistent-spanners}
%\end{figure}

\subsection{Related work}
Spanners and variants thereof have been studied for at least three decades, so we focus on results relating to pairwise or subsetwise spanners. Alth\"{o}fer et al.~\cite{Alth90} provide a simple greedy algorithm that constructs a multiplicative $r$--spanner given a graph $G$ and a real number $r > 0$. The greedy algorithm sorts edges in $E$ by nondecreasing weight, then for each $e = \{u,v\} \in E$, computes the shortest path $P(u,v)$ from $u$ to $v$ in the current spanner, and adds the edge to the spanner if the weight of $P(u,v)$ is greater than $r \cdot c_e$. The resulting subgraph $H$ is a $r$--spanner for $G$. The main result of \cite{Alth90} is that, given a weighted graph $G$ and $t \ge 1$, there is a greedy $(2t+1)$--spanner $H$ containing at most $n \lceil n^{1/t}\rceil$ edges, and whose weight is at most $w(MST(G)) (1 + \frac{n}{2t})$ where $w(MST(G))$ denotes the weight of a minimum spanning tree of $G$.

Sigurd and Zachariasen~\cite{sigrd04} present an ILP formulation for the minimum-weight pairwise spanner problem (see Section~\ref{section:ilp}), and show that the greedy algorithm~\cite{Alth90} performs well on  sparse graphs of up to 64 vertices. \'{A}lvarez-Miranda and Sinnl \cite{Alvarez-Miranda2018} present a mixed ILP formulation for the tree $t^*$--spanner problem, which asks for a spanning tree of a graph $G$ with the smallest stretch factor $t^*$.

%Kortsarz et al.~\cite{KORTSARZ1994222} present a $\log\left(\frac{E}{V}\right)$-approximation algorithm for computing the \emph{sparsest} 2-spanner. 
%Dinitz et al. have designed an $\tilde{O}(n^\frac{2}{3})$-approximation algorithm for the directed $k$-spanner problem that works for all $k \ge 1$, which is the first sublinear approximation for arbitrary edge-lengths~\cite{Dinitz2011}. 
%The unweighted $k$--spanner problem admits an $O(1)$-approximation once the stretch requirement becomes $k = \Omega(\log n)$~\cite{Elkin2007}.

Dinitz et al.~\cite{Dinitz11} provide a flow-based linear programming (LP) relaxation to approximate the directed spanner problem. Their LP formulation is similar to that in~\cite{sigrd04}; however, they provide an approximation algorithm which relaxes their ILP, whereas the previous formulation was used to compute spanners to optimality. Additionally, the LP formulation applies to graphs of unit edge cost; they later take care of it in their rounding algorithm by solving a shortest path arborescence problem. They provide a $\tilde{O}(n^{\frac{2}{3}})$--approximation algorithm for the directed $k$--spanner problem for $k \geq 1$, which is the first sublinear approximation algorithm for arbitrary edge lengths. %This algorithm has two major part: linear programming relaxation and sampling based on the shortest path arborescence problem.
Bhattacharyya et al.~\cite{Bhattacharyya12} provide a slightly different formulation to approximate $t$--spanners as well as other variations of this problem. %Their algorithm is very much similar to the previous work. They also have two major part as in the previous algorithm although their linear programming constraints are slightly different. However this work has in interesting aspect, they have later derandomized the sampling technique they used.
They provide a polynomial time $O((n \log n)^{1-\frac{1}{k}})$--approximation algorithm for the directed $k$--spanner problem. Berman et al.~\cite{BERMAN13} provide an alternative randomized LP rounding schemes that lead to better approximation ratios. They improved the approximation ratio to $O(\sqrt{n} \log n)$ where the approximation ratio of the algorithm provided by Dinitz et al.~\cite{Dinitz11} was $O(n^\frac{2}{3})$. They have also improved the approximation ratio for the important special case of directed 3--spanners with unit edge lengths.

\begin{comment}
Klein~\cite{Klein06} shows that for any edge-weighted planar graph $G(V,E)$ and subset $S\subseteq V$, there is a subgraph $H$ of $G$ whose weight is a constant times that of the minimum Steiner tree over $S$, such that distances in $H$ between nodes in $S$ are at most $1 + \varepsilon$ times the corresponding distances in $G$, where $\varepsilon > 0$. Cygan et al. \cite{Cygan13} provides algorithms to compute subsetwise and pairwise additive spanners for unweighted graphs. %More formally a pairwise spanner on pairs $\mathcal{P}$, or $\mathcal{P}$-spanner for short, with stretch function $f(.)$ is a subgraph $H \subseteq G$ such that, for any $(s, t) \in \mathcal{P}, \delta_H(s, t) \leq f(\delta_G(s, t))$. Cygan et al.~\cite{cygan13} 
They show that there exists an additive $\mathcal{P}$--spanner of size $O(n|\mathcal{P} \log n|^{\frac{1}{4}})$ with stretch function $f(d) = d + 4 \log n$. They have also shown how to construct $O(n\sqrt{|S|})$ subsetwise additive spanners with additive stretch 2 of size $O(n\sqrt{|S|})$.%Parter~\cite{Parter14} has introduced the notion of $k$-hybrid spanners, in which non neighboring vertex pairs enjoy a multiplicative $k$ stretch and the neighboring vertex pairs enjoy a multiplicative $(2k-1)$ stretch factor. 
Kavitha~\cite{Kavitha15} provides an algorithm to compute an additive $2$ $\mathcal{P}$--spanner of size $O(n|\mathcal{P}|^\frac{1}{3})$. Abboud et al.~\cite{Abboud15} provides a tight lower bound on the sparsity of additive spanners. Kavitha~\cite{Kavitha2017} has studied pairwise spanners with additive stretch factors $4$ and $6$. Bodwin et al.~\cite{Bodwin:2016:BDP:2884435.2884496} have provided an upper bound of the size of subset-wise spanners with polynomial additive stretch factor. They have provided the best sparsity and additive error trade off.
\end{comment}

There are several results on multi-level or grade-of-service Steiner trees, e.g.,  \cite{MLST2018,Balakrishnan1994,1288137,Chuzhoy2008,mirchandani1996MTT}, while  multi-level spanner problems have not been studied yet.%\cite{MLST2018}, \cite{1288137}, \cite{Balakrishnan1994}, \cite{mirchandani1996MTT}, and \cite{Chuzhoy2008}  

\section{Approximation algorithms for MLGS}
\label{section:approximation_algorithms}

%As the MLGS problem generalizes the subsetwise spanner problem, which in turn generalizes the NP--hard minimum-weight $t$--spanner problem, MLGS is NP--hard.
Here, we assume an oracle subroutine that computes an optimal $(S\times S)$--spanner, given a graph $G$, subset $S \subseteq V$, and $t$. The intent is to determine if approximating MLGS is significantly harder than the subsetwise spanner problem. We formulate simple bottom-up and top-down approaches for the MLGS problem.
%in a way similar to the MLST problem\cite{MLST2018}.

\subsection{Oracle bottom-up approach} \label{subsection:oracle-bu}
The approach is as follows: compute a minimum subsetwise $(T_1 \times T_1)$--spanner of $G$ with stretch factor $t$. This immediately induces a feasible solution to the MLGS problem, as one can simply copy each edge from the spanner to every level (or, in terms of grades of service, assign grade $\ell$ to each spanner edge). We  then prune edges that are not needed on higher levels. It is easy to show that the solution returned has cost no worse than $\ell$ times the cost of the optimal solution. Let $\OPT$ denote the cost of the optimal MLGS $G_{\ell}^* \subseteq G_{\ell-1}^* \subseteq \ldots \subseteq G_1^*$ for a graph $G$. Let $\MIN_i$ denote the cost of a minimum subsetwise $(T_i \times T_i)$--spanner for level $i$ with stretch $t$, and let $\BOT$ denote the cost computed by the bottom-up approach. If no pruning is done, then $\BOT = \ell \MIN_1$.

\begin{theorem}
The oracle bottom-up algorithm described above yields a solution that satisfies $\BOT \le \ell \cdot \OPT$.
\end{theorem}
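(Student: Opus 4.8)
The plan is to establish two inequalities: $\BOT \le \ell \cdot \MIN_1$ and $\MIN_1 \le \OPT$, which together immediately give $\BOT \le \ell \cdot \OPT$. Since pruning edges can only decrease the cost, the first inequality follows from the observation noted in the text that, without pruning, $\BOT = \ell \cdot \MIN_1$; formally, the pruned solution has cost at most that of the unpruned one because pruning removes edges from some levels and never adds any. So the real content is in showing $\MIN_1 \le \OPT$.

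To see $\MIN_1 \le \OPT$, consider an optimal MLGS solution $G_\ell^* \subseteq G_{\ell-1}^* \subseteq \ldots \subseteq G_1^*$ with cost $\OPT = \sum_{i=1}^\ell \sum_{e \in E(G_i^*)} c_e$. The bottom level graph $G_1^*$ is, by definition of a feasible MLGS, a subsetwise $(T_1 \times T_1)$--spanner of $G$ with stretch factor $t$. Hence its cost $\sum_{e \in E(G_1^*)} c_e$ is at least $\MIN_1$, the minimum cost of any such spanner. On the other hand, since all edge weights $c_e$ are positive and $\OPT$ sums the weights of $G_1^*$ together with the (nonnegative) weights of $G_2^*, \ldots, G_\ell^*$, we have $\OPT \ge \sum_{e \in E(G_1^*)} c_e \ge \MIN_1$. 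Chaining this with $\BOT \le \ell \cdot \MIN_1$ yields the claim.

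I do not anticipate a genuine obstacle here; the argument is essentially a telescoping/dominance observation. The one point requiring a small amount of care is the pruning step: one must argue that the specific pruning procedure used always produces a feasible MLGS (so that $\BOT$ is well-defined as the cost of a valid solution) and never increases cost relative to the unpruned copy-to-every-level solution. Feasibility of the unpruned solution is clear — copying a $(T_1\times T_1)$--spanner to every level $i$ works because $T_i \subseteq T_1$, so the stretch guarantee for pairs in $T_i \times T_i$ is inherited — and any sensible pruning that only removes edges while maintaining the per-level spanner property preserves both feasibility and the cost bound. Everything else is immediate from the definitions and the nonnegativity of edge weights.
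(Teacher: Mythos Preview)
Your proof is correct and follows the same approach as the paper: establish $\MIN_1 \le \OPT$ by observing that $G_1^*$ is itself a feasible $(T_1\times T_1)$--spanner, combine with $\BOT \le \ell\cdot\MIN_1$, and conclude. The paper's version is slightly terser (it simply uses $\BOT = \ell\,\MIN_1$ without pruning), but your added remarks on pruning and feasibility are sound and do not change the argument.
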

\begin{proof}
We know $\MIN_1 \le \OPT$, since the lowest-level graph $G_1^*$ is a $(T_1 \times T_1)$--spanner whose cost is at least $\MIN_1$. Further, we have $\BOT = \ell \MIN_1$ if no pruning is done. Then $\MIN_1 \le \OPT \le \BOT = \ell \cdot \MIN_1$, so $\BOT \le \ell \cdot \OPT$. \hfill $\square$
\end{proof}

The ratio of $\ell$ is asymptotically tight; an example can be constructed by letting $G$ be a cycle containing $t$ vertices and all edges of cost 1. Let two adjacent vertices in $G$ appear in  $T_{\ell}$, while all vertices appear in $T_1$, as shown in Figure \ref{fig:topbot-tightness}. As $t \to \infty$, the ratio $\frac{\BOT}{\OPT}$ approaches $\ell$. Note that in this example, no edges can be pruned without violating the $t$--spanner requirement.

%\todo[inline]{can we construct a tightness example that does not depend on $t$? For example if $t = 2$ and $\ell$ is arbitrary. - RCS}

\begin{figure} [t]
    \centering
    \includegraphics[width=4in]{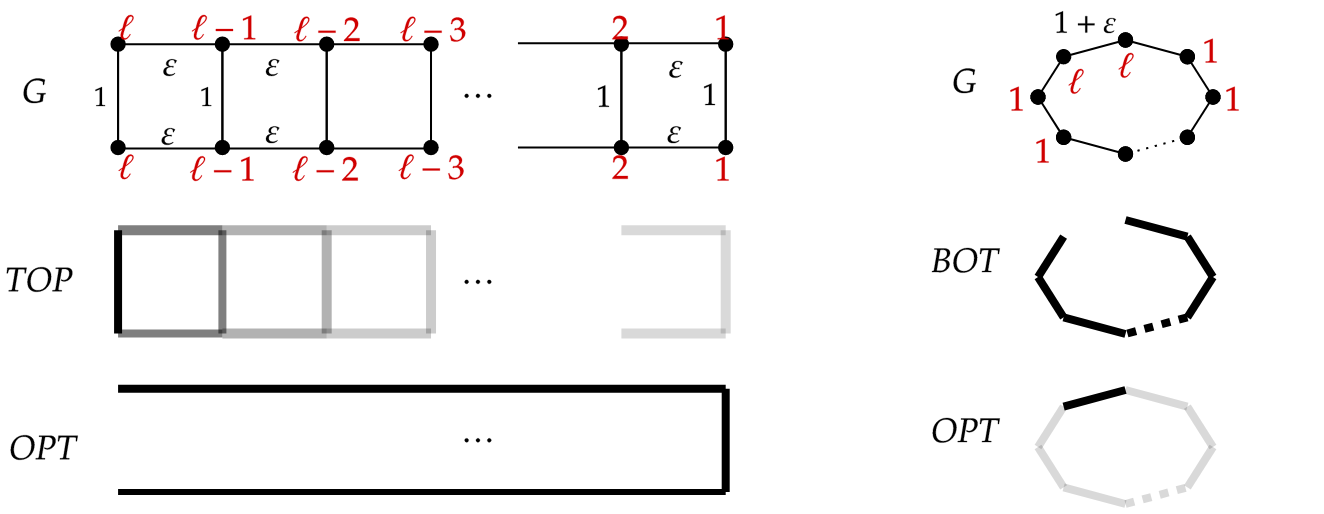}
    \caption{\emph{Left:} Tightness example of the top-down approach. Consider the lattice graph $G$ with pairs of vertices of  grade $\ell$ ($|T_{\ell}| = 2$), $\ell-1$, and so on. The edge connecting the two vertices of grade $i$ has weight $1$, and all other edges have weight $\varepsilon$, where $0 < \varepsilon \ll 1$. Set $t = 2$. The top-down solution (middle) has cost $\TOP \approx \ell + (\ell-1) + \ldots + 1  = \frac{\ell(\ell+1)}{2}$, while the optimal solution (bottom) has cost $\OPT \approx \ell$.
    \emph{Right:} Tightness example of the bottom-up approach. Consider a cycle $G$ containing two adjacent vertices of grade $\ell$, and the remaining vertices of grade 1. The edge connecting the two vertices of grade $\ell$ is $1+\varepsilon$, while the remaining edges have weight 1. Setting $t = |E|$ yields  $\BOT = \ell|E|$ while $\OPT = (1+\varepsilon)\ell + 1(|E|-1) \approx |E|+\ell$.}
    \label{fig:topbot-tightness}
\end{figure}

We give a simple heuristic that attempts to ``prune'' unneeded edges without violating the $t$--spanner requirement. Note that any pruning strategy may not prune any edges, as a worst case example (Figure \ref{fig:topbot-tightness}) cannot be pruned. Let $G_1$ be the $(T_1 \times T_1)$--spanner computed by the bottom-up approach. To compute a $(T_2 \times T_2)$--spanner $G_2$ using the edges from $G_1$, we can compute a distance preserver of $G_1$ over terminals in $T_2$. One simple strategy is to use shortest paths as explained below.

%{\color{red} we classify the edges in $G_1$ into three types: i) neither endpoint belongs to $T_2$, ii) one endpoint belongs to $T_2$, and iii) both endpoints belong to $T_2$.

%For each type (i), ii), or iii)) in order, sort all edges of that type in descending order of weight. For each edge $e$ in type i), remove $e$ if it does not violate the $t$-spanner property for any pair of vertices in $T_2$; otherwise keep it. Proceed with type ii) and iii) edges similarly in order to compute $G_2$. In the same way, compute $G_3$ from $G_2$, and so forth.}

%\todo[inline]{below is a very simple pruning strategy that could be considered - is it better or worse than the previous one? Presumably it's much more efficient as APSP only needs to be computed once. In any case it should be mentioned if we mention any pruning. - RCS

%also, pruning optimally seems to be NP-hard - RCS, Reyan}
Even more efficient pruning is possible through the {\em distant preserver} literature \cite{coppersmith2006sparse,bodwin2017linear}. A well-known result of distant preservers is due to the following theorem:

\begin{theorem}[\cite{coppersmith2006sparse}] \label{Elkin-preserver}
Given $G=(V,E)$ with $|V|=n$, and $P \subset {V \choose 2}$, there exists a subgraph $G'$ with $O(n+\sqrt{n} |P|)$ edges such that for all $(u,v) \in P$ we have $d_{G'}(u,v) = d_G(u,v)$.
\end{theorem}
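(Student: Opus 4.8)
The plan is to exhibit $G'$ as the union of a carefully chosen family of shortest paths and then bound its number of edges by a combinatorial branching argument. First I would make shortest paths \emph{canonical}: perturb the edge weights infinitesimally (or fix a lexicographic tie-breaking rule on paths) so that between every pair of vertices there is a unique shortest path $\pi(u,v)$, and these paths are \emph{consistent}, meaning that whenever $x,y$ both lie on $\pi(u,v)$, the $x$--$y$ portion of $\pi(u,v)$ is exactly $\pi(x,y)$ (immediate from uniqueness, since any subpath of a shortest path is a shortest path between its endpoints). Define $G' = \bigcup_{(u,v) \in P} \pi(u,v)$. Then $d_{G'}(u,v) = d_G(u,v)$ for all $(u,v) \in P$ is trivial, so the entire content of the theorem is the size bound $|E(G')| = O(n + \sqrt{n}\,|P|)$.

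The structural lemma I would establish next is that any two canonical paths intersect in a single (possibly empty) contiguous subpath: if $\pi(u,v)$ and $\pi(x,y)$ shared vertices $a$ and $b$, consistency forces the $a$--$b$ segment of each to equal $\pi(a,b)$, so the two paths agree on everything between their first and last common vertices. A useful consequence is that, for a fixed source $s$, the union $\bigcup_{(s,\cdot) \in P} \pi(s,\cdot)$ is a subtree of the (now unique) shortest-path tree rooted at $s$, so locally the family of $P$-paths looks like a bundle of root-to-leaf tree paths.

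The crux is the edge count. I would partition $E(G')$ into edges lying in long ``unbranched'' stretches (maximal subpaths traversed identically by a fixed set of $P$-paths) versus edges incident to \emph{branch points}, where one of these stretches ends because some path leaves or joins. Each maximal unbranched stretch is a shortest path, hence has at most $n-1$ edges; the core estimate is that the stretches, equivalently the branch points, are few --- specifically that each pair $(u,v) \in P$ is charged $O(\sqrt n)$ branch points, so there are $O(\sqrt n\,|P|)$ of them in total, plus an $O(n)$ lower-order term that dominates only when $|P| = O(\sqrt n)$. The main obstacle is precisely this counting: one has to define ``branch point of a pair'' so that the charging is genuinely few-to-one, and then argue via an averaging/pigeonhole step that a pair cannot accrue more than $\Theta(\sqrt n)$ of them --- the intuition being that a path with many branch points would force a vertex through which $\Omega(\sqrt n)$ of these paths pass while each leaves along a distinct edge of $G'$, and iterating this would require $\omega(n)$ distinct vertices. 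Making this precise (and pinning down the exact $\sqrt n$ threshold) is the technical heart of the Coppersmith--Elkin argument and is where a self-contained write-up would spend essentially all of its length; alternatively, since the statement is quoted verbatim from \cite{coppersmith2006sparse}, one may simply invoke it as a black box for our purposes.
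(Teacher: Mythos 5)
The paper offers no proof of this statement: it is imported verbatim from Coppersmith and Elkin with a citation, so the fallback in your last sentence (invoke it as a black box) is precisely what the paper does. Judged as a proof, your sketch is right about the setup --- consistent tie-breaking, $G'$ as a union of canonical shortest paths, and the structural lemma that two consistent shortest paths meet in a single contiguous segment --- and the preservation property is indeed trivial. But the entire content is the size bound, and there your argument has a genuine gap. The claim that ``each pair $(u,v)\in P$ is charged $O(\sqrt n)$ branch points'' is not the right invariant and is not true uniformly: a single path can be crossed by up to $|P|-1$ other paths and so can carry far more than $\sqrt n$ branch points; the $\sqrt n$ only appears in aggregate. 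Your edge accounting is also off: partitioning $E(G')$ into maximal unbranched stretches bounds $|E(G')|$ by the \emph{total length} of the stretches, and ``few stretches, each of length at most $n-1$'' gives $O(n^{3/2}|P|)$, not $O(n+\sqrt n\,|P|)$.

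The counting that actually works goes through \emph{branching events}: an unordered pair of paths together with a vertex at which they diverge. Your structural lemma implies each pair of paths produces only $O(1)$ branching events, so there are $O(|P|^2)$ in total. Separately, a vertex of degree $d$ in $G'$ witnesses $\Omega\bigl((d-2)^2\bigr)$ branching events, because the paths through it must realize at least $(d-2)/2$ distinct local edge-configurations and any two distinct configurations at a common vertex diverge there. Cauchy--Schwarz then converts this into
\[
\sum_{v}\bigl(\deg_{G'}(v)-2\bigr)\;\le\;\sqrt{\,n\sum_{v}\bigl(\deg_{G'}(v)-2\bigr)^2\,}\;=\;O\bigl(\sqrt{n\cdot |P|^2}\bigr),
\]
whence $|E(G')|\le n+O(\sqrt n\,|P|)$. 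Without this step (or an equivalent), your write-up proves only the trivial direction and the $O(n)$ term. Since the theorem is used here purely as an imported tool, citing \cite{coppersmith2006sparse} as the paper does is the correct resolution; if you want a self-contained proof, it must contain the branching-event/Cauchy--Schwarz argument, not the per-pair charging you propose.
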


The above theorem hints at a {\em sparse} construction of $G_2$ simply by letting $P=T_2\times T_2$. Given $G_1$, let $G_i$ be a distance preserver of $G_{i-1}$ over the terminals $T_i$, for all $i = 2, \ldots, \ell$. An example is to let $G_2$ be the union of all shortest paths (in $G_1$) over vertices $v,w \in G_2$. The result is clearly a feasible solution to the MLGS problem, as the shortest paths are preserved exactly from $G_1$, so each $G_i$ is a $(T_i \times T_i)$--spanner of $G$ with stretch factor $t$.

%\todo[inline, color=cyan]{I still do not see a polynomial time algorithm that achieves this. Mohammad told me he had found some results that can construct with a log factor.

%Overall, is the construction of optimal distance preservers NP-hard? -FS}

%In our analysis, sparse pruning does not play any role. Indeed, without any pruning all the theoretical results stay the same. For experimental purposes, we simply used all-pairs shortest path (APSP), which is very easy and fast, and provides relatively good solutions on random graphs.

\subsection{Oracle top-down approach} \label{subsection:oracle-td}
A simple top-down heuristic that computes a solution is as follows: let $G_{\ell}$ be the minimum-cost $(T_{\ell} \times T_{\ell})$--spanner over terminals $T_{\ell}$ with stretch factor $t$, and cost $\MIN_{\ell}$. Then compute a minimum cost $(T_{\ell-1} \times T_{\ell-1})$--spanner over $T_{\ell-1}$, and let $G_{\ell-1}$ be the union of this spanner and $G_{\ell}$. Continue this process, where $G_i$ is the union of the minimum cost $(T_i \times T_i)$--spanner and $G_{i+1}$. Clearly, this produces a feasible solution to the MLGS problem.

%\todo[inline,color=cyan]{This is very nice indeed. Instead of growing $G_i$ to get $G_{i-1}$, just simply include it. I like it!

%Also, with the new insight I do not see any reason why the same procedure cannot be applied to a single level approximation algorithm (instead of the oracle).

%Also, all the composite should follow.

%Another thing: Then even the composite heuristic will need $\ell$ computations?!

%-FS}
%you could in theory compute \ell spanners to start, then take unions of spanners

The solution returned by this approach, with cost denoted by $\TOP$, is not worse than $\frac{\ell+1}{2}$ times the optimal. 
%similar to the top-down approach for the MLST problem.
Define $\MIN_i$ and $\OPT$ as before. Define $\OPT_i$ to be the cost of edges on level $i$ but not level $i+1$ in the optimal MLGS solution, so that $\OPT = \ell \OPT_{\ell} + (\ell-1)\OPT_{\ell-1} + \ldots + \OPT_1$. Define $\TOP_i$ analogously.

\begin{theorem} \label{lemma:mlgs-td}
The oracle top-down algorithm described above yields an approximation that satisfies the following:
\begin{enumerate}
\item[(i)] $\TOP_{\ell} \le \OPT_{\ell}, $ \label{td-i}

\medskip
\item[(ii)] $\TOP_i \le \OPT_i + \OPT_{i+1} + \ldots + \OPT_{\ell},\quad i=1,\dots,\ell-1,$ \label{td-ii}

\medskip
\item[(iii)] $\TOP\le\frac{\ell+1}{2}\OPT.$
\end{enumerate}
\end{theorem}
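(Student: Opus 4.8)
The plan is to establish (i) and (ii) first and then derive (iii) from them by a short swap-of-summation argument. Throughout I will use one bookkeeping identity about \emph{nested} feasible solutions: if $H_\ell \subseteq \cdots \subseteq H_1$ is a feasible MLGS and $c(H_i) := \sum_{e \in E(H_i)} c_e$ is the weight of the level-$i$ subgraph counting each edge once, then (because of nestedness) the solution cost is $\sum_{i=1}^{\ell} c(H_i)$, and an edge whose highest level is exactly $j$ contributes $c_e$ to $c(H_i)$ for every $i \le j$. Hence $c(H_i)$ equals the sum, over $j \ge i$, of the weights of the edges whose top level is exactly $j$. Applied to the optimal solution this reads $c(G_i^*) = \OPT_i + \OPT_{i+1} + \cdots + \OPT_\ell$, and applied to the top-down output, $\TOP_i = c(G_i) - c(G_{i+1})$ for $i < \ell$ while $\TOP_\ell = c(G_\ell)$; the top-down output is indeed nested because $G_i = S_i \cup G_{i+1} \supseteq G_{i+1}$, where $S_i$ denotes the minimum-cost $(T_i \times T_i)$-spanner the algorithm computes on level $i$, of weight $\MIN_i$.

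For (i): the algorithm sets $G_\ell = S_\ell$, so $\TOP_\ell = c(G_\ell) = \MIN_\ell$; since $G_\ell^*$ is itself a feasible $(T_\ell \times T_\ell)$-spanner we have $\OPT_\ell = c(G_\ell^*) \ge \MIN_\ell$, which gives $\TOP_\ell \le \OPT_\ell$. For (ii): the edges counted in $\TOP_i$ all lie in $G_i \setminus G_{i+1} = (S_i \cup G_{i+1}) \setminus G_{i+1} \subseteq S_i$, so $\TOP_i \le c(S_i) = \MIN_i$; meanwhile $G_i^*$ is a feasible $(T_i \times T_i)$-spanner, hence $\MIN_i \le c(G_i^*) = \OPT_i + \cdots + \OPT_\ell$ by the identity above. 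Chaining the two inequalities proves (ii) (and reproves (i) as the special case $i=\ell$).

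For (iii): expand $\TOP = \sum_{i=1}^{\ell} i\,\TOP_i$, substitute the bounds from (i)--(ii), and exchange the order of summation:
\[
\TOP \;\le\; \sum_{i=1}^{\ell} i \sum_{j=i}^{\ell} \OPT_j \;=\; \sum_{j=1}^{\ell} \OPT_j \sum_{i=1}^{j} i \;=\; \sum_{j=1}^{\ell} \frac{j(j+1)}{2}\,\OPT_j .
\]
Since $\OPT = \sum_{j=1}^{\ell} j\,\OPT_j$ and $j \le \ell$ gives $\tfrac{j(j+1)}{2} \le \tfrac{\ell+1}{2}\,j$, the right-hand side is at most $\tfrac{\ell+1}{2}\OPT$; equivalently $\TOP - \tfrac{\ell+1}{2}\OPT \le \sum_{j=1}^{\ell} \tfrac{j(j-\ell)}{2}\,\OPT_j \le 0$, as every term is nonpositive ($\OPT_j \ge 0$, $j \ge 1$, $j \le \ell$).

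All the arithmetic here is routine; the one place that needs genuine care is the accounting identity $c(G_i^*) = \sum_{j\ge i}\OPT_j$, which is precisely where the nestedness $G_\ell^* \subseteq \cdots \subseteq G_1^*$ of the \emph{optimal} solution enters — without it one cannot bound $\MIN_i$ in terms of the $\OPT_j$'s. A secondary subtlety worth spelling out is why $\TOP_i \le \MIN_i$ rather than $=\MIN_i$: the newly computed spanner $S_i$ may reuse edges already present in $G_{i+1}$, and this overlap only decreases the top-down cost on level $i$.
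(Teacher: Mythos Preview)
Your proof is correct and follows essentially the same approach as the paper: both establish $\TOP_i \le \MIN_i \le \OPT_i + \cdots + \OPT_\ell$ via the nestedness of the optimal solution, then derive (iii) by expanding $\TOP = \sum_i i\,\TOP_i$, substituting the bounds, regrouping to obtain $\sum_j \tfrac{j(j+1)}{2}\OPT_j$, and comparing to $\OPT = \sum_j j\,\OPT_j$. Your version is a bit more explicit about the bookkeeping identity $c(G_i^*) = \sum_{j\ge i}\OPT_j$ and the reason $\TOP_i \le \MIN_i$ (overlap between $S_i$ and $G_{i+1}$), but these are exactly the observations the paper uses implicitly.
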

\begin{proof}
Inequality (i) is true by definition, as we compute an optimal $(T_{\ell} \times T_{\ell})$--spanner whose cost is $\TOP_{\ell}$, while $\OPT_{\ell}$ is the cost of some $(T_{\ell} \times T_{\ell})$-spanner. For (ii), note that $\TOP_i \le \MIN_i$, with equality when the minimum-cost $(T_i \times T_i)$--spanner and $G_{i+1}$ are disjoint. The spanner of cost $\OPT_i + \OPT_{i+1} + \ldots + \OPT_{\ell}$ is a feasible $(T_i \times T_i)$--spanner, so $\MIN_i \le \OPT_i + \ldots + \OPT_{\ell}$, which shows (ii).

To show (iii), note that (i) and (ii) imply
\begin{align*}
\TOP &= \ell \TOP_{\ell} + (\ell-1)\TOP_{\ell-1} + \ldots + \TOP_1 \\
&\le \ell \OPT_{\ell} + (\ell-1)(\OPT_{\ell-1} + \OPT_{\ell}) + \ldots + (\OPT_1 + \OPT_2 + \ldots + \OPT_{\ell}) \\
&= \frac{\ell(\ell+1)}{2}\OPT_{\ell} + \frac{(\ell-1)\ell}{2}\OPT_{\ell-1} + \ldots + \frac{1 \cdot 2}{2}\OPT_1 \\
&\le \frac{\ell+1}{2}\OPT,
\end{align*}
as by definition $\OPT = \ell \OPT_{\ell} + (\ell-1)\OPT_{\ell-1} + \ldots + \OPT_1$. \hfill$\square$
\end{proof}

\begin{comment}
\begin{lemma}
$\TOP\le\frac{\ell+1}{2}\OPT.$
\end{lemma}
By Lemma \ref{lemma:mlgs-td}, we have
\begin{align*}
\TOP &= \ell \TOP_{\ell} + (\ell-1)\TOP_{\ell-1} + \ldots + \TOP_1 \\
&\le \ell \OPT_{\ell} + (\ell-1)(\OPT_{\ell-1} + \OPT_{\ell}) + \ldots + (\OPT_1 + \OPT_2 + \ldots + \OPT_{\ell}) \\
&= \frac{\ell(\ell+1)}{2}\OPT_{\ell} + \frac{(\ell-1)\ell}{2}\OPT_{\ell-1} + \ldots + \frac{1 \cdot 2}{2}\OPT_1 \\
&\le \frac{\ell+1}{2}\OPT,
\end{align*}
where the last step uses the definition $\OPT = \ell \OPT_{\ell} + (\ell-1)\OPT_{\ell-1} + \ldots + \OPT_1$.
\end{comment}

%For the MLST problem, the approximation ratio of $\frac{\ell+1}{2}$ is tight, and an example was given showing tightness \cite{MLST2018}. As the MLST problem is a special case of the MLGS problem where $t$ is arbitrarily large (or infinite), the ratio $\frac{\ell+1}{2}$ here must be asymptotically tight. However, tight example can be constructed without setting $t$ arbitrarily large by enforcing minimum $(T_i \times T_i)$--spanners to be edge-disjoint 
The ratio $\frac{\ell+1}{2}$ is tight as illustrated in Figure \ref{fig:topbot-tightness}, left.

\subsection{Combining top-down and bottom-up}
\label{subsection:other-heuristics}
Again, assume we have access to an oracle that computes a minimum weight $(S \times S)$--spanner of an input graph $G$ with given stretch factor $t$. A simple combined method, similar to \cite{MLST2018}, is to run the top-down and bottom-up approaches for the MLGS problem, and take the solution with minimum cost. This has a slightly better approximation ratio than either of the two approaches.

\begin{theorem} \label{thm:tdbu}
The solution whose cost is $\min(\TOP, \BOT)$ is not worse than $\dfrac{\ell+2}{3}$ times the cost $\OPT$ of the optimal MLGS.
\end{theorem}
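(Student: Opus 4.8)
The plan is to combine the bottom-up bound $\BOT = \ell \MIN_1$ with the refined per-level bounds from Theorem~\ref{lemma:mlgs-td} for the top-down solution, and to balance the two against each other depending on a parameter. First I would recall the two ingredients already available: (a) $\BOT = \ell \MIN_1 \le \ell \OPT$ (using $\MIN_1 \le \OPT$, proven earlier), and more usefully $\BOT = \ell \MIN_1 \le \ell(\OPT_1 + \OPT_2 + \cdots + \OPT_\ell)$, since a feasible $(T_1 \times T_1)$--spanner of cost $\OPT_1 + \cdots + \OPT_\ell$ exists (take all edges appearing on any level of the optimal solution); and (b) from Theorem~\ref{lemma:mlgs-td}, $\TOP \le \sum_{i=1}^{\ell} \binom{i+1}{2}\OPT_i = \sum_{i=1}^{\ell}\frac{i(i+1)}{2}\OPT_i$.

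Next I would write $\OPT = \sum_{i=1}^{\ell} i\, \OPT_i$ and express both upper bounds as linear functionals of the vector $(\OPT_1, \dots, \OPT_\ell)$: $\TOP \le \sum_i \frac{i(i+1)}{2}\OPT_i$ and $\BOT \le \ell \sum_i \OPT_i$. Then $\min(\TOP,\BOT) \le \lambda \TOP + (1-\lambda)\BOT$ for any $\lambda \in [0,1]$, which is at most $\sum_i \bigl(\lambda \tfrac{i(i+1)}{2} + (1-\lambda)\ell\bigr)\OPT_i$. To get the ratio $\frac{\ell+2}{3}$ against $\OPT = \sum_i i\,\OPT_i$, it suffices to choose $\lambda$ so that for every $i \in \{1,\dots,\ell\}$,
\[
\lambda \frac{i(i+1)}{2} + (1-\lambda)\ell \le \frac{\ell+2}{3}\, i.
\]
I expect that $\lambda = \tfrac{2}{3}$ is the right choice; then the inequality becomes $\frac{i(i+1)}{3} + \frac{\ell}{3} \le \frac{(\ell+2)i}{3}$, i.e.\ $i(i+1) + \ell \le (\ell+2)i$, i.e.\ $i^2 + i + \ell \le \ell i + 2i$, i.e.\ $\ell \le \ell i + i - i^2 = i(\ell + 1 - i)$. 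Since $1 \le i \le \ell$, the quantity $i(\ell+1-i)$ is a downward parabola in $i$ whose value at both endpoints $i=1$ and $i=\ell$ equals $\ell$, and which is larger in between; hence $i(\ell+1-i) \ge \ell$ always holds, with equality exactly at $i=1$ and $i=\ell$. This establishes the per-coefficient domination, and summing over $i$ gives $\min(\TOP,\BOT) \le \frac{\ell+2}{3}\OPT$.

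The main obstacle — really the only nontrivial point — is identifying the correct convex-combination weight $\lambda$ and verifying the resulting family of quadratic inequalities over all $i$; once $\lambda = 2/3$ is in hand, everything reduces to the elementary observation about the parabola $i(\ell+1-i)$. I would also remark, as the paper presumably does elsewhere, that this ratio is tight (or nearly so), which would follow from exhibiting an instance where $\TOP$ and $\BOT$ are simultaneously close to $\frac{\ell+2}{3}\OPT$ — for instance by overlaying the two tightness constructions of Figure~\ref{fig:topbot-tightness} so that neither heuristic can do better.
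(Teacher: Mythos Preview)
Your proof is correct and follows essentially the same approach as the paper: take a convex combination of the $\TOP$ and $\BOT$ upper bounds (expressed as linear functionals in the $\OPT_i$), then choose the weight $\lambda=\alpha=\tfrac{2}{3}$ so that each coefficient is dominated by $\tfrac{\ell+2}{3}\,i$. Your verification via the factorization $\ell \le i(\ell+1-i)$ (equivalently $(i-1)(i-\ell)\le 0$) is in fact cleaner than the paper's, which derives $r=\tfrac{\ell+2}{3}$ by combining the $i=1$ and $i=\ell$ inequalities and then asserts without detail that $\alpha=\tfrac{2}{3}$ satisfies the whole system.
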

The proof is given in Appendix \ref{apdx:tdbu}.

\begin{comment}
Charikar et al. \cite{1288137} describe a simple approximation algorithm to the rate model of the related \emph{quality-of-service (QoS) multicast tree} problem, which works as follows: construct the \emph{rounded-up instance}, which is obtained by rounding the priority $R(v)$ of each vertex up to the nearest power of 2. Then for each rounded up priority (1, 2, 4, 8, \ldots, $2^{\lceil \log_2 \ell \rceil}$), compute a Steiner tree separately for the terminals at each priority. Taking the union of these Steiner trees and pruning unneeded edges results in a valid solution. The authors show that this simple approach is a $4\rho$--approximation, where $\rho$ is the approximation ratio of the chosen Steiner tree heuristic.

As the analysis also does not depend on the fact that the solution computed is a tree, we can extend this to the MLGS problem as follows: construct the rounded up instance, where the priority $R(v)$ of each vertex is rounded up to the nearest power of 2, or 0 if $v$ is a non-terminal. Then for each rounded up priority (1, 2, 4, 8, \ldots, $2^{\lceil \log_2 \ell \rceil}$), compute a subsetwise spanner separately for the terminals at that priority or higher.%\todo{[K] What is meant by 'or higher' here?} e.g. compute for terminals of priority 1,2,4,..., and for terminals of priority 2,4,8,... etc.
Take the union of all of these spanner edges, only keeping the edge with highest priority (e.g. if an edge $e$ appears with priority 2, 4, and 8, then set the priority of $e$ to 8). Here we may assume $\ell$ is a power of 2; otherwise, replace $2^{\lceil \log_2 \ell \rceil}$ with $\ell$. This approach yields a valid MLGS whose cost is not worse than 4 times the optimal.
\end{comment}

\subsection{Heuristic subsetwise spanners}
\label{heuristic_subsetwise_spanners}
 So far, we have assumed that we have access to an optimal subsetwise spanner given by an oracle. Here we propose a heuristic algorithm to compute subsetwise spanner. The key idea is to apply the greedy spanner to an auxiliary complete graph with terminals as its vertices and the shortest distance between terminals as edge weights. Then, we apply the distance preserver discussed in Theorem \ref{Elkin-preserver} to construct a subsetwise spanner.

%\begin{theorem} (Coppersmith, Elkin \cite{coppersmith2006sparse})\label{Elkin-preserver}
%Given $G=(V,E)$ on $n$ vertices and for $P \subset V\times V$, there exist a subgraph $G'$ with $O(n+\sqrt{n} |P|)$ edges such that for all $(u,v) \in P$ we have $d_{G'}(u,v) = d_G(u,v)$ 
%\end{theorem}

\begin{theorem}
Given graph $G(V,E)$, stretch factor $t\ge1$, and subset $T\subset V$, there exists a $(T\times T)$--spanner for $G$ with stretch factor $t$ and $O(n+\sqrt{n}|T|^{1+\frac{2}{t+1}})$ edges.
\end{theorem}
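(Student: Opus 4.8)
The plan is to reduce the construction of a subsetwise spanner to two tools that are already available: the greedy spanner of Alth\"ofer et al.~\cite{Alth90} and the distance preserver of Theorem~\ref{Elkin-preserver}. First I would form the auxiliary complete graph $K$ on vertex set $T$, giving the edge $\{u,v\}$ weight $d_G(u,v)$. By the triangle inequality for $d_G$, the shortest-path metric of $K$ agrees with $d_G$ on $T$, i.e.\ $d_K(u,v) = d_G(u,v)$ for all $u,v \in T$, since $K$ already contains the direct edge of that length. Running the greedy algorithm of \cite{Alth90} on $K$ with stretch $t$ (choosing the integer stretch parameter $k$ of the greedy algorithm as large as possible subject to the stretch being at most $t$) yields a subgraph $H' \subseteq K$ with $d_{H'}(u,v) \le t\, d_G(u,v)$ for all $u,v \in T$ and with $O(|T|^{1+2/(t+1)})$ edges.

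Next I would ``lift'' $H'$ back into $G$. Each edge $\{u,v\} \in E(H')$ is virtual, so I replace it by a fixed (arbitrarily chosen) shortest $u$--$v$ path in $G$, which has length $d_G(u,v)$, and let $P \subseteq {V \choose 2}$ be the set of endpoint pairs of the edges of $H'$, so that $|P| = |E(H')| = O(|T|^{1+2/(t+1)})$. Applying Theorem~\ref{Elkin-preserver} to $G$ and $P$ produces a subgraph $G' \subseteq G$ with $O(n + \sqrt{n}\,|P|) = O(n + \sqrt{n}\,|T|^{1+2/(t+1)})$ edges such that $d_{G'}(x,y) = d_G(x,y)$ for every pair $\{x,y\} \in P$. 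This already gives the claimed edge bound, so it remains only to check that $G'$ is a $(T\times T)$--spanner with stretch $t$.

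For the stretch bound, fix $u,v \in T$ and take a path $u = x_0, x_1, \dots, x_m = v$ in $H'$ realizing $d_{H'}(u,v) = \sum_{i=0}^{m-1} d_G(x_i, x_{i+1})$. Every consecutive pair $\{x_i, x_{i+1}\}$ lies in $P$, so $G'$ contains a walk from $x_i$ to $x_{i+1}$ of length $d_{G'}(x_i,x_{i+1}) = d_G(x_i, x_{i+1})$; concatenating these walks gives a $u$--$v$ walk in $G'$ of length $\sum_{i} d_G(x_i,x_{i+1}) = d_{H'}(u,v) \le t\, d_G(u,v)$. Hence $d_{G'}(u,v) \le t\, d_G(u,v)$, and since $G' \subseteq G$ we trivially have $d_{G'}(u,v) \ge d_G(u,v)$, so $G'$ is a valid subsetwise $(T\times T)$--spanner of $G$ with stretch $t$ and the required number of edges.

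I expect the only real subtlety to be the stretch-composition step: one must make sure the greedy spanner $H'$ is built on the metric $d_G$ restricted to $T$ (not, say, on an induced subgraph $G[T]$), since that is exactly what makes $d_K = d_G$ on $T$ and lets the ``lifted'' shortest-path lengths add up to $d_{H'}(u,v)$ without loss. Everything else is bookkeeping: fixing a canonical shortest path for each virtual edge (the preserver only needs the pair, not the path), and selecting the integer stretch parameter of the greedy algorithm so that its edge count $|T|^{1+1/k}$ matches the stated exponent $|T|^{1+2/(t+1)}$ (exact when $t$ is an odd integer, and absorbed into the $O(\cdot)$ otherwise).
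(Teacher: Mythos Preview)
Your proposal is correct and follows essentially the same approach as the paper: build the complete graph on $T$ weighted by $d_G$, take a greedy $t$--spanner of it to obtain the pair set $P$, and then apply the distance preserver of Theorem~\ref{Elkin-preserver} to $P$. In fact your write-up is more careful than the paper's in justifying the stretch bound via the concatenation argument and in noting the parameter choice for the greedy spanner.
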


\begin{proof}

The spanner may be constructed as follows:
 
 \begin{enumerate}
  \item  Construct the  terminal complete graph $\bar{G}$ whose vertices are $\bar{V}:= T$, such that the weight of each edge $\{u,v\}$ in $\bar{G}$ is the length of the shortest path connecting them in $G$, i.e., $w(u,v) = d_G(u,v)$.
  \item Construct a greedy $t-$spanner $\bar{H}(\bar{V},\bar{E}')$ of $\bar{G}$. According to \cite{Alth90}, this graph has $|T|^{1+\frac{2}{t+1}}$ edges.  Let $P=\bar{E}'$.

  \item Apply Theorem \ref{Elkin-preserver} to obtain a subgraph $H$ of $G$ such that for all $(u,v) \in P$ we have $d_H(u,v) = d_G(u,v) $. Therefore, for arbitrary $u,v \in T$ we get $d_H(u,v) \leq t \, d_G(u,v)$. 
  \item Finally, let shortest-path$(u,v)$ be the collection of edges in the shortest path from $u$ to $v$ in $H$, and 
  \[
  E = \bigcup\limits_{(u,v) \in P} \{ e \in E \,|\, e \in \textrm{shortest-path}(u,v). \, \}
  \]
 \end{enumerate}
 
According to Theorem \ref{Elkin-preserver}, the number of edges in the constructed spanner $H(V,E)$ is $O(n+\sqrt{n}|P|)=O\left(n+\sqrt{n}|T|^{1+\frac{2}{t+1}}\right)$.\hfill$\square$
\end{proof}

Hence, we may replace the oracle in the top-down and bottom-up approaches (Sections \ref{subsection:oracle-bu}-\ref{subsection:oracle-td}) with the above heuristic; we call the resulting algorithms  heuristic top-down and heuristic bottom-up. We analyze the performance of all algorithms 
%(two approximations and two heuristics) 
on several types of graphs.
%in Section~\ref{section:results}.

Incorporating the heuristic subsetwise spanner in our top-down and bottom up heuristics has two implications. First, the size of the final MLGS is dominated by the size of the spanner at the bottom level, i.e., $O(n+\sqrt{n}|P|)=O\left(n+\sqrt{n}|T_1|^{1+\frac{2}{t+1}}\right)$. Second, since the greedy spanner algorithm used in the above subsetwise spanner can produce spanners that are $O(n)$ more costly than the optimal solution, the same  applies to the subsetwise spanner. Our experimental results, however, indicate that the heuristic approaches are very close to the optimal solutions obtained via our ILP.

\section{Integer linear programming (ILP) formulations}
\label{section:ilp}
%Sigurd and Zachariasen \cite{sigrd04} present the first ILP formulation for the pairwise spanner problem, which 
We describe the original ILP formulation for the pairwise spanner problem~\cite{sigrd04}. 
Let $K = \{(u_i, v_i)\}\subset V\times V$ be the set of vertex pairs; recall that the $t$--spanner problem is a special case where $K = V \times V$. Here we will use unordered pairs of distinct vertices, so in the $t$--spanner problem we have $|K| = \binom{|V|}{2}$ instead of $|V|^2$. %Let $c_e$ denote the cost of edge $e$. 
This ILP formulation uses \emph{paths} as decision variables. Given $(u, v) \in K$, denote by $P_{uv}$ the set of paths from $u$ to $v$ of cost no more than $t \cdot d_G(u,v)$, and denote by $P$ the union of all such paths, i.e., $P = \bigcup_{(u,v) \in K} P_{uv}$. Given a path $p \in P$ and edge $e \in E$, let $\delta_p^e = 1$ if $e$ is on path $p$, and $0$ otherwise. Let $x_e = 1$ if $e$ is an edge in the pairwise spanner $H$, and $0$ otherwise. Given $p \in P$, let $y_p = 1$ if path $p$ is in the spanner, and zero otherwise. An ILP formulation for the pairwise spanner problem is given below.

\begin{align}
    \allowdisplaybreaks
    \text{Minimize} \sum_{e \in E} c_ex_e \text{ subject to}\\
    \sum_{p \in P_{uv}} y_p \delta_p^e &\le x_e & \forall e \in E; \forall (u,v) \in K \label{eqn:ilp-sigurd-1}\\
    \sum_{p \in P_{uv}} y_p &\ge 1 & \forall (u,v) \in K \label{eqn:ilp-sigurd-2}\\
    x_e &\in \{0,1\} & \forall e \in E \\
    y_p &\in \{0,1\} & \forall p \in P
\end{align}

Constraint (\ref{eqn:ilp-sigurd-2}) ensures that for each pair $(u,v) \in K$, at least one $t$--spanner path is selected, and constraint (\ref{eqn:ilp-sigurd-1}) enforces that on the selected $u$-$v$ path, every edge along the path appears in the spanner. The main drawback of this ILP is that the number of path variables is exponential in the size of the graph. The authors use delayed column generation by starting with a subset $P' \subset P$ of paths, with the starting condition that for each $(u,v) \in K$, at least one $t$--spanner path in $P_{uv}$ is in $P'$. The authors leave as an open question whether this problem admits a useful polynomial-size ILP. %source: slides written by the authors

We introduce a 0-1 ILP formulation for the pairwise $t$--spanner problem based on multicommodity flow, which uses $O(|E||K|)$ variables and constraints, where $|K| = O(|V|^2)$. Define $t$, $c_e$, $d_G(u,v)$, $K$, and $x_e$ as before. Note that $d_G(u,v)$ can be computed in advance, using any all-pairs shortest path (APSP) method.

Direct the graph by replacing each edge $e = \{u,v\}$ with two edges $(u,v)$ and $(v,u)$ of weight $c_e$. Let $E'$ be the set of all directed edges, i.e., $|E'| = 2|E|$. Given $(i,j) \in E'$, and an unordered pair of vertices $(u,v) \in K$, let $x_{(i,j)}^{uv} = 1$ if edge $(i,j)$ is included in the selected $u$-$v$ path in the spanner $H$, and 0 otherwise. This definition of path variables is similar to that by \'{A}lvarez-Miranda and Sinnl~\cite{Alvarez-Miranda2018} for the tree $t^*$--spanner problem. 
We select a total order of all vertices so that the path constraints (\ref{eqn:ilp-2})--(\ref{eqn:ilp-3}) are well-defined. This induces $2|E||K|$ binary variables, or $2|E|\binom{|V|}{2} = 2|E||V|(|V|-1)$ variables in the standard $t$--spanner problem. Note that if $u$ and $v$ are connected by multiple paths in $H$ of length $\le t \cdot d_G(u,v)$, we need only set $x_{(i,j)}^{uv} = 1$ for edges along some path. Given $v \in V$, let $In(v)$ and $Out(v)$ denote the set of incoming and outgoing edges for $v$ in $E'$. In \eqref{eqn:ilp-1}--\eqref{eqn:ilp-5} we assume $u < v$ in the total order, so spanner paths are from $u$ to $v$. An ILP formulation for the pairwise spanner problem is as follows.

\begin{align}
    \allowdisplaybreaks
    \text{Minimize} \sum_{e \in E} c_ex_e \text{ subject to} \label{eqn:ilp-obj}\\
    \sum_{(i,j) \in E'} x_{(i,j)}^{uv} c_e &\le t \cdot d_G(u,v) & \hspace{-10pt}\forall (u,v) \in K; e = \{i,j\} \label{eqn:ilp-1}\\
    \sum_{(i,j) \in Out(i)} x_{(i,j)}^{uv} - \sum_{(j,i) \in In(i)} x_{(j,i)}^{uv} &= \begin{cases}
        1 & i = u \\
        -1 & i = v \\
        0 & \text{else}
    \end{cases} & \hspace{-10pt} \forall (u,v) \in K; \forall i \in V \label{eqn:ilp-2} \\
    \sum_{(i,j) \in Out(i)} x_{(i,j)}^{uv} & \le 1 &\forall (u,v) \in K; \forall i \in V \label{eqn:ilp-3}\\
    x_{(i,j)}^{uv} + x_{(j,i)}^{uv} &\le x_e & \hspace{-30pt}\forall (u,v) \in K; \forall e = \{i,j\} \in E \label{eqn:ilp-4}\\
    x_e, x_{(i,j)}^{uv} &\in \{0,1\} \label{eqn:ilp-5}
\end{align} %\todo{[K] Can we get these equation numbers to  line up?  One way would be to replace $e$ by $\{i,j\}$ in all the displayed equations.} %easy fix, I added some negative space on constraint (10)
Constraint \eqref{eqn:ilp-1} requires that for all $(u,v) \in K$, the sum of the weights of the selected edges corresponding to the pair $(u,v)$ is not more than $t \cdot d_G(u,v)$. Constraints \eqref{eqn:ilp-2}--\eqref{eqn:ilp-3}  
%are similar to those from a well-known ILP formulation for the shortest path problem; they
require that the selected edges corresponding to $(u,v) \in K$ form a simple path beginning at $u$ and ending at $v$. Constraint \eqref{eqn:ilp-4} enforces that, if edge $(i,j)$ or $(j,i)$ is selected on some $u$-$v$ path, then its corresponding undirected edge $e$ is selected in the spanner; further, $(i,j)$ and $(j,i)$ cannot both be selected for some pair $(u,v)$. Finally, \eqref{eqn:ilp-5} enforces that all variables are binary.

The number of variables is $|E| + 2|E||K|$ and the number of constraints is $O(|E||K|)$, where $|K| = O(|V|^2)$. Note that the variables $x_{(i,j)}^{uv}$ can be relaxed to be continuous in $[0,1]$.

%\begin{lemma} \label{lemma:ilp-1}
%Given a minimum pairwise spanner $H^* = (V, E^*)$, one can construct a feasible solution to the ILP with cost $\OPT$.
%\end{lemma}
%\begin{proof}
%Construct an ILP solution as follows: 
%\end{proof}

%\begin{lemma} \label{lemma:ilp-2}
%The optimal solution to the ILP with objective cost $\OPT_{ILP}$ induces a feasible $t$--spanner $H$.
%\end{lemma}
%\begin{proof}
%Consider an unordered pair $(u,v) \in K$ with $u < v$, and the set of decision variables satisfying $x_{(i,j)}^{uv} = 1$. By \eqref{eqn:ilp-2} and \eqref{eqn:ilp-3}, these chosen edges form a simple path from $u$ to $v$. The sum of the weights of these edges is at most $t \cdot d_G(u,v)$ by \eqref{eqn:ilp-1}. Then by constraint \eqref{eqn:ilp-4}, the chosen edges corresponding to $(u,v)$ appear in the spanner, which is induced by the set of edges $e$ with $x_e = 1$. Hence $\OPT \le \OPT_{ILP}$.
%\end{proof}

%Lemmas \ref{lemma:ilp-1} and \ref{lemma:ilp-2} imply $\OPT_{ILP} = \OPT$.

\subsection{ILP formulation for the MLGS problem} \label{subsection:mlgsilp}
Recall that the MLGS problem generalizes the subsetwise spanner problem, which is a special case of the pairwise spanner problem for $K = S \times S$. Again, we use unordered pairs, i.e., $|K| = \binom{|S|}{2}$.

We generalize the ILP formulation in \eqref{eqn:ilp-obj}--\eqref{eqn:ilp-5} to the MLGS problem as follows. Recall that we can encode the levels in terms of required grades of service $R:V \to \{0,1,\ldots,\ell\}$. Instead of 0--1 indicators $x_e$, let $y_e$ denote the grade of edge $e$ in the multi-level spanner; that is, $y_e = i$ if $e$ appears on level $i$ but not level $i+1$, and $y_e = 0$ if $e$ is absent. The only difference is that for the MLGS problem, we assign grades of service to all $u$-$v$ paths by assigning grades to edges along each $u$-$v$ path. That is, for all $u,v \in T_1$ with $u < v$, the \emph{selected} path from $u$ to $v$ has grade $\min(R(u), R(v))$, which we denote by $m_{uv}$. Note that we only need to require the existence of a path for terminals $u,v \in T_1$, where $u < v$. An ILP formulation for the MLGS problem is as follows.

\begin{align}
    \allowdisplaybreaks
    \text{Minimize} \sum_{e \in E} c_ey_e \text{ subject to} \label{eqn:mlgs-ilp-obj}\\
    \sum_{(i,j) \in E'} x_{(i,j)}^{uv} c_e &\le t \cdot d_G(u,v) & \forall u,v \in T_1; e = \{i,j\} \label{eqn:mlgs-ilp-1}\\
    \sum_{(i,j) \in Out(i)} x_{(i,j)}^{uv} - \sum_{(j,i) \in In(i)} x_{(j,i)}^{uv} &= \begin{cases}
        1 & i = u \\
        -1 & i = v \\
        0 & \text{else}
    \end{cases} & \forall u,v \in T_1; \forall i \in V \label{eqn:mlgs-ilp-2}\\
    \sum_{(i,j) \in Out(i)} x_{(i,j)}^{uv} &\le 1 &\forall u,v \in T_1; \forall i \in V \label{eqn:mlgs-ilp-3}\\
    y_e &\ge m_{uv}x_{(i,j)}^{uv} & \forall u,v \in T_1; \forall \, e = \{i,j\} \label{eqn:mlgs-ilp-4}\\
    y_e &\ge m_{uv}x_{(j,i)}^{uv} & \forall u,v \in T_1; \forall \, e = \{i,j\} \label{eqn:mlgs-ilp-5}\\
    x_{(i,j)}^{uv} &\in \{0,1\}  \label{eqn:mlgs-ilp-6}
\end{align}

%This ILP essentially takes the optimal $t$--spanner constructed by the original ILP \eqref{eqn:ilp-obj}--\eqref{eqn:ilp-5}, and lays on top of it priorities for each edge which tells one which levels in the multi-level spanner the edge appears.  Thus the indicator variables $x_{(i,j)}^{uv}$ can be considered as lying on the mezzanine level $G_0$ which lies between $G$ and $G_1$, and at each level going up, the indicators $y_e$ determine if the edge is taken in $G_i$ or not.
%First sentence isn't really true; we are just assigning priorities to paths

Constraints \eqref{eqn:mlgs-ilp-4}--\eqref{eqn:mlgs-ilp-5} enforce that for each pair $u,v \in V$ such that $u < v$, the edges along the selected $u$-$v$ path (not necessarily every $u$-$v$ path) have a grade of service greater than or equal to the minimum grade of service needed to connect $u$ and $v$, that is, $m_{uv}$. If multiple pairs $(u_1, v_1)$, $(u_2, v_2)$, \ldots, $(u_k, v_k)$ use the same edge $e = \{i,j\}$ (possibly in opposite directions), then the grade of edge $e$ should be $y_e = \max(m_{u_1v_1}, m_{u_2v_2}, \ldots, m_{u_kv_k})$. It is implied by \eqref{eqn:mlgs-ilp-4}--\eqref{eqn:mlgs-ilp-5} that $0 \le y_e \le \ell$ in an optimal solution.

\begin{theorem}\label{THM:tSILP}
An optimal solution to the ILP given in \eqref{eqn:ilp-obj}--\eqref{eqn:ilp-5} yields an optimal pairwise spanner of $G$ over a set $K \subset V \times V$.
\end{theorem}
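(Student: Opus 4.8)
The plan is to prove correctness in two directions: (a) any feasible pairwise spanner $H$ of $G$ over $K$ gives a feasible solution to the ILP with objective value $\sum_{e\in E(H)} c_e$, and (b) any feasible ILP solution yields a subgraph $H := \{e \in E : x_e = 1\}$ that is a pairwise spanner of $G$ over $K$ with cost $\sum_e c_e x_e$. Since the objectives coincide with the cost of the corresponding subgraph in both directions, the optimum of the ILP equals the cost of a minimum-cost pairwise spanner, and the edge set extracted from an optimal ILP solution realizes it.

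For direction (a), given a pairwise spanner $H$, for each pair $(u,v)\in K$ with $u<v$ in the chosen total order, pick a shortest path $p_{uv}$ from $u$ to $v$ inside $H$; feasibility of $H$ gives $\mathrm{len}(p_{uv}) = d_H(u,v) \le t\, d_G(u,v)$. Orient $p_{uv}$ from $u$ to $v$ and set $x_{(i,j)}^{uv}=1$ exactly for the directed edges traversed. Then \eqref{eqn:ilp-2} holds because the indicator vector of a simple directed $u$--$v$ path is a unit $u$--$v$ flow; \eqref{eqn:ilp-3} holds because a simple path uses at most one out-edge at every vertex; \eqref{eqn:ilp-1} holds because the traversed edges have total weight $\mathrm{len}(p_{uv}) \le t\,d_G(u,v)$; and \eqref{eqn:ilp-4} holds by setting $x_e = 1$ for $e\in E(H)$ (note a simple path never uses both $(i,j)$ and $(j,i)$). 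The objective is then $\sum_{e\in E(H)} c_e x_e = \sum_{e\in E(H)} c_e$.

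For direction (b), take any feasible ILP solution and fix $(u,v)\in K$. Constraints \eqref{eqn:ilp-2} say the 0--1 vector $x^{uv}$ is an integral $u$--$v$ flow of value $1$; together with the capacity-like bound \eqref{eqn:ilp-3} on out-degree, I would argue that the support of $x^{uv}$ contains a simple directed path $Q_{uv}$ from $u$ to $v$ (decompose the unit flow into one path plus cycles; discard the cycles — or observe directly that following out-edges from $u$, which exist and are unique by \eqref{eqn:ilp-2}--\eqref{eqn:ilp-3}, must reach $v$ without repeating a vertex). The total weight of all edges in $\mathrm{supp}(x^{uv})$ is $\le t\,d_G(u,v)$ by \eqref{eqn:ilp-1}, so in particular $\mathrm{len}(Q_{uv}) \le t\, d_G(u,v)$. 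By \eqref{eqn:ilp-4}, every undirected edge underlying $Q_{uv}$ has $x_e = 1$, hence lies in $H$; so $Q_{uv}$ is a $u$--$v$ path in $H$ of length $\le t\, d_G(u,v)$, giving $d_H(u,v) \le t\, d_G(u,v)$. This holds for all $(u,v)\in K$, so $H$ is a pairwise spanner, and its cost is $\sum_{e\in E(H)} c_e = \sum_{e\in E} c_e x_e$ since an optimal solution will not set $x_e=1$ for edges carried by no pair.

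The main obstacle is the argument in direction (b) that the support of the flow $x^{uv}$ actually contains a simple $u$--$v$ path all of whose edges are forced into $H$ — i.e., that the extra flow circulating on cycles (which \eqref{eqn:ilp-3} does not fully forbid, since it only bounds out-degree by $1$ and a vertex not on the $u$--$v$ path could still have a cycle through it) does not cause trouble. The clean way is the standard flow-decomposition: an integral unit $u$--$v$ flow decomposes into a single simple $u$--$v$ path plus a (possibly empty) set of edge-disjoint cycles; keeping the path gives $Q_{uv}$, and \eqref{eqn:ilp-4} applied to the path edges (a subset of the support) does the rest. One should also remark that \eqref{eqn:ilp-3} is not strictly needed for feasibility but tightens the relaxation, and that relaxing $x_{(i,j)}^{uv}\in[0,1]$ is harmless because for fixed $x_e$ the remaining constraints describe an integral (network-flow) polytope; this last point, while not required for the theorem as stated, is worth a sentence for context.
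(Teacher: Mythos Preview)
Your proof is correct and follows essentially the same two-direction argument as the paper: from an optimal spanner, pick a shortest $u$--$v$ path in $H$ to set the $x^{uv}$ variables, and from an optimal ILP solution, read off $H=\{e:x_e=1\}$ and verify each pair has a short path in $H$. The one place you are more careful than the paper is direction~(b): the paper simply asserts that \eqref{eqn:ilp-2}--\eqref{eqn:ilp-3} force the support of $x^{uv}$ to be a simple $u$--$v$ path, whereas you correctly note that vertex-disjoint cycles are not excluded and handle this via flow decomposition (or the direct out-edge-following argument), observing that \eqref{eqn:ilp-1} bounds the weight of the whole support and hence of the extracted path---this extra care is warranted and does not change the overall structure of the argument.
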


\begin{theorem}\label{THM:MLGSILP}
An optimal solution to the ILP given in \eqref{eqn:mlgs-ilp-obj}--\eqref{eqn:mlgs-ilp-6} yields an optimal solution to the MLGS problem.
\end{theorem}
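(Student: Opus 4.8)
The plan is to prove Theorem~\ref{THM:MLGSILP} in two directions, mirroring the structure one would use for Theorem~\ref{THM:tSILP}: first show that any optimal MLGS solution induces a feasible ILP solution of equal cost, and then show that any optimal ILP solution induces a feasible MLGS solution of equal cost. Combining the two inequalities gives equality of optima, and the construction in the second direction recovers the actual graph sequence.

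\textbf{From MLGS to ILP.} Let $G_\ell^* \subseteq \cdots \subseteq G_1^*$ be an optimal MLGS, and pass to the grades-of-service encoding: for each edge $e$, let $y_e^*$ be the largest index $i$ with $e \in E(G_i^*)$ (and $0$ if $e$ is in no level), so that the cost is exactly $\sum_e c_e y_e^*$. I must produce values $x_{(i,j)}^{uv}$ satisfying \eqref{eqn:mlgs-ilp-1}--\eqref{eqn:mlgs-ilp-6}. Fix $u,v \in T_1$ with $u<v$ and let $m_{uv} = \min(R(u),R(v))$; by feasibility of the MLGS, $G_{m_{uv}}^*$ is a $(T_{m_{uv}} \times T_{m_{uv}})$--spanner containing both $u$ and $v$, so it has a $u$--$v$ path $Q_{uv}$ of length $\le t\, d_G(u,v)$, which I may take to be simple. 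Orient $Q_{uv}$ from $u$ to $v$ and set $x_{(i,j)}^{uv}=1$ for the directed edges traversed, $0$ otherwise. The flow-conservation constraints \eqref{eqn:mlgs-ilp-2} and the out-degree constraints \eqref{eqn:mlgs-ilp-3} hold because a simple directed path has unit excess at its endpoints, zero elsewhere, and uses at most one out-edge per vertex; the length constraint \eqref{eqn:mlgs-ilp-1} holds by choice of $Q_{uv}$. For \eqref{eqn:mlgs-ilp-4}--\eqref{eqn:mlgs-ilp-5}: whenever $x_{(i,j)}^{uv}=1$, the undirected edge $e=\{i,j\}$ lies on $Q_{uv} \subseteq E(G_{m_{uv}}^*)$, hence $e$ appears on level $m_{uv}$, so $y_e^* \ge m_{uv} = m_{uv} x_{(i,j)}^{uv}$. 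Thus $(x,y^*)$ is ILP-feasible with objective $\sum_e c_e y_e^* = \OPT$, giving $\text{ILP-OPT} \le \OPT$.

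\textbf{From ILP to MLGS.} Conversely, let $(x_{(i,j)}^{uv}, y_e)$ be an optimal ILP solution with objective $\sum_e c_e y_e$. Note first that in an optimal solution each $y_e$ equals exactly $\max\{ m_{uv} : e=\{i,j\},\ x_{(i,j)}^{uv}=1 \text{ or } x_{(j,i)}^{uv}=1 \}$ (or $0$ if the set is empty), since \eqref{eqn:mlgs-ilp-4}--\eqref{eqn:mlgs-ilp-5} force $\ge$ and minimality of the objective forces $\le$; in particular $0 \le y_e \le \ell$. Define $G_i = (V, \{ e : y_e \ge i \})$ for $i=1,\dots,\ell$; by construction $G_\ell \subseteq \cdots \subseteq G_1$. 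To verify feasibility, take any $i$ and any $u,v \in T_i$, so $R(u),R(v) \ge i$ and hence $m_{uv} \ge i$ (and $u,v \in T_1$, so the pair is constrained). By \eqref{eqn:mlgs-ilp-2}--\eqref{eqn:mlgs-ilp-3}, the directed edges with $x_{(i,j)}^{uv}=1$ carry a unit $u$--$v$ flow with out-degree $\le 1$ everywhere; a short argument shows this support decomposes into a single simple $u$--$v$ dipath $Q_{uv}$ possibly together with vertex-disjoint dicycles, and since dropping the cycles only decreases $\sum x_{(i,j)}^{uv} c_e$, the path itself satisfies $\sum_{e \in Q_{uv}} c_e \le t\, d_G(u,v)$ by \eqref{eqn:mlgs-ilp-1}. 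Each undirected edge $e$ on $Q_{uv}$ has $y_e \ge m_{uv} \ge i$ by \eqref{eqn:mlgs-ilp-4}--\eqref{eqn:mlgs-ilp-5}, so $Q_{uv}$ lies entirely in $G_i$ and has length $\le t\, d_G(u,v)$; hence $G_i$ is a $(T_i \times T_i)$--spanner with stretch $t$. The cost of this MLGS is $\sum_{i=1}^\ell \sum_{e \in E(G_i)} c_e = \sum_e c_e |\{ i : y_e \ge i\}| = \sum_e c_e y_e$, so $\OPT \le \text{ILP-OPT}$. Combining the two inequalities yields $\OPT = \text{ILP-OPT}$, and the displayed construction turns any optimal ILP solution into an optimal MLGS.

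\textbf{Anticipated obstacle.} The only non-routine point is the flow-decomposition step in the second direction: constraints \eqref{eqn:mlgs-ilp-2}--\eqref{eqn:mlgs-ilp-3} guarantee a unit $u$--$v$ flow on $0/1$ variables with every out-degree at most $1$, but they do not by themselves forbid extra disjoint directed cycles in the support. I would argue that any such cycle can be removed (set those $x$-variables to $0$) without violating \eqref{eqn:mlgs-ilp-2}, \eqref{eqn:mlgs-ilp-3}, or \eqref{eqn:mlgs-ilp-4}--\eqref{eqn:mlgs-ilp-5}, and that this can only help \eqref{eqn:mlgs-ilp-1}; after all removals the support is a single simple $u$--$v$ path. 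One must also confirm the subtle indexing convention — that requiring the spanner condition only for pairs in $T_1 \times T_1$ with $m_{uv} = \min(R(u),R(v))$ is equivalent to the per-level requirement in Definition~\ref{DEF:MLGS} — which follows because $u,v \in T_i \iff \min(R(u),R(v)) \ge i$. Everything else is bookkeeping analogous to the proof of Theorem~\ref{THM:tSILP}.
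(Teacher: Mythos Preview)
Your proposal is correct and follows essentially the same two-direction argument as the paper's own proof: construct $G_i=\{e:y_e\ge i\}$ from an optimal ILP solution and, conversely, route each pair $(u,v)$ along a short path in $G^*_{m_{uv}}$ to obtain a feasible ILP solution. If anything you are slightly more careful than the paper---you explicitly argue integrality of $y_e$ at optimum and address the possibility of stray directed cycles in the flow support, both of which the paper glosses over; the only cosmetic difference is that the paper sets $y_e$ to $\max\{m_{uv}:\text{edge }e\text{ used for }(u,v)\}$ rather than to the actual grade in the optimal MLGS, but either choice yields $\text{ILP-OPT}\le\OPT$.
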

We give the proofs in Appendices \ref{apdx:tSILP} and \ref{apdx:MLGSILP}.

\subsection{Size reduction techniques}
\label{size_reduction_techniques}
We can reduce the size of the ILP using the following shortest path tests, which works well in practice and also applies to the MLGS problem. Note that we are concerned with the total cost of a solution, not the number of edges.

If $d_G(i,j) < c(i,j)$, for some edge $\{i,j\} \in E$, then we can remove $\{i,j\}$ from the graph, as no min-weight spanner of $G$ uses edge $\{i,j\}$. If $H^*$ is a min-cost pairwise spanner that uses edge $\{i,j\}$, then we can replace $\{i,j\}$ with a shorter $i$-$j$ path $p_{ij}$ without violating the $t$--spanner requirement. In particular, if some $u$-$v$ path uses both edge $\{i,j\}$ as well as some edge(s) along $p_{ij}$, then this path can be rerouted to use only edges in $p_{ij}$ with smaller cost.

We reduce the number of variables needed in the single-level ILP formulation (\eqref{eqn:ilp-obj}--\eqref{eqn:ilp-5}) with the following test: given $u,v \in K$ with $u < v$ and some directed edge $(i,j) \in E'$, if $d_G(u,i) + c(i,j) + d_G(j,v) > t \cdot d_G(u,v)$, then $(i,j)$ cannot possibly be included in the selected $u$-$v$ path, so set $x_{(i,j)}^{uv}=0$. If $(i,j)$ or $(j,i)$ cannot be selected on any $u$-$v$ path, we can safely remove $\{i,j\}$ from $E$.

Conversely, given some directed edge $(i,j) \in E'$, let $G'$ be the directed graph obtained by removing $(i,j)$ from $E'$ (so that $G'$ has $2|E|-1$ edges). For each $u,v \in K$ with $u < v$, if $d_{G'}(u,v) > t \cdot d_G(u,v)$, then edge $(i,j)$ must be in any $u$-$v$ spanner path, so set $x_{(i,j)}^{uv} = 1$. For its corresponding undirected edge $e$, $x_e = 1$.

\section{Experimental results}
\subsection{Setup}
%For this experiment we implement the clustering algorithm provided by Cygan et al.~\cite{Cygan13} for additive stretch 2. We analyze how this algorithm performs in practice. We then use this algorithm to implement a simple version of multi-level spanner as described in Section~\ref{silly_mlgs}.

%We use different graph generative models to get graph data for our experiment. We consider four random generative models: Barab\'{a}si--Albert~\cite{barabasi1999emergence}, Erd\H{o}s--Renyi~\cite{erdos1959random}, random geometric~\cite{penrose2003random}, and Watts--Strogatz~\cite{watts1998collective}. These are the most common graph generative models found in the literature~\cite{newman2003structure}. We now briefly describe each of these models.

%\subsubsection{Graph Data Synthesis}
We use the Erd\H{o}s--R\'{e}nyi~\cite{erdos1959random} and Watts--Strogatz \cite{watts1998collective} models to generate random graphs. 
Given a number of vertices, $n$, and probability $p$, the model $\textsc{ER}(n,p)$ assigns an edge to any given pair of vertices with probability $p$. 
%(i.e., edges are assigned via Bernoulli trials for each pair of vertices). 
An instance of $\textsc{ER}(n,p)$ with $p=(1+\varepsilon)\frac{\ln n}{n}$ is connected with high probability for $\varepsilon>0$~\cite{erdos1959random}).  For our experiments we use $n \in \{20, 40, 60, 80, 100\}$, and $\varepsilon = 1$.

In the Watts-Strogatz model, $\textsc{WS}(n,K,\beta)$, initially we create a ring lattice of constant degree $K$, and then rewire each edge with probability $0\leq \beta \leq 1$ while avoiding self-loops and duplicate edges. The Watts-Strogatz model generates small-world graphs with high clustering coefficients~\cite{watts1998collective}. For our experiments we use $n \in \{20, 40, 60, 80, 100\}$, $K=6$, and $\beta=0.2$. %We keep the value of $K$ and $\beta$ equal to $6$ and $0.2$ respectively.

An instance of the MLGS problem is characterized by four parameters: graph generator, number of vertices $|V|$, number of levels $\ell$, and stretch factor $t$.  As there is randomness involved, we generated 3 instances for every choice of parameters (e.g., ER, $|V| = 80$, $\ell=3$, $t=2$).

%\subsubsection{Selection of Levels and Terminal Nodes}
We generated MLGS instances with 1, 2, or 3 levels ($\ell \in \{1, 2, 3\}$), where terminals are selected on each level by randomly sampling $\lfloor {|V| \cdot (\ell-i+1)}/(\ell+1) \rfloor$ vertices on level $i$ so that the size of the terminal sets decreases linearly. As the terminal sets are nested, $T_i$ can be selected by sampling from $T_{i-1}$ (or from $V$ if $i=1$).
%\subsubsection{Stretch Factors}
We used four different stretch factors in our experiments, $t \in \{1.2, 1.4, 2, 4\}$. Edge weights are randomly selected from $\{1,2,3,\ldots,10\}$. 

\subsubsection{Algorithms and outputs}

We implemented the bottom-up (BU) and top-down (TD) approaches from Section \ref{section:approximation_algorithms} in Python 3.5, as well as the combined approach that selects the better of the two (Section \ref{subsection:other-heuristics}). To evaluate the approximation algorithms and the heuristics, we implemented the ILPs described in Section~\ref{section:ilp} using CPLEX 12.6.2. 
%The size reduction techniques described in Section~\ref{size_reduction_techniques} work very well in practice; hence we have used them in our experiment.
We used the same high-performance computer for all experiments (Lenovo NeXtScale nx360 M5 system with 400 nodes). 
%It is a distributed system; the models of the processors in this HPC are Xeon Haswell E5-2695 Dual 14-core and Xeon Broadwell E5-2695 Dual 14-core. The speed of a processor is 2.3 GHz. There are 400 nodes each having 28 cores. Each node has 192 GB memory. The operating system is CentOS 6.10.

For each instance of the MLGS problem, we compute the costs of the MLGS returned using the bottom-up (BU), the top-down
 (TD), and the combined (min(BU, TD)) approaches, as well as the minimum cost MLGS using the ILP in Section \ref{subsection:mlgsilp}. 
The three heuristics involve a (single-level) subroutine; we used both the heuristic described in Section \ref{heuristic_subsetwise_spanners}, as well as the flow formulation described in Section~\ref{section:ilp} which computes subsetwise spanners to optimality. We compare the algorithms with and without the oracle to assess whether computing (single-level) spanners to optimality significantly improves the overall quality of the solution.

We show the performance ratio for each heuristic in the $y$-axis (defined as the heuristic cost divided by \OPT), and how the ratio depends on the input parameters (number of vertices $|V|$, number of levels $\ell$, and stretch factors $t$).
%We record the number of \STP computations involved for the guaranteed performance heuristic ($\CMP(\mathcal{Q}^{*})$) (note that this equals $\ell + |\mathcal{Q}^{*}|$). 
Finally, we discuss the running time of the ILP. All box plots show the minimum, interquartile range and maximum, aggregated over all instances using the parameter being compared.

\subsection{Results}
\label{section:results}

We first discuss the results for Erd\H{o}s--R{\'e}nyi graphs. 
%, shown in Figures \ref{oracle_NVR}--\ref{approx_box}.
Figures \ref{oracle_NVR}--\ref{oracle_box} show the results of the oracle top-down, bottom-up, and combined approaches. We show the impact of different parameters (number of vertices $|V|$, number of levels $\ell$, and stretch factors $t$) using line plots for three heuristics separately in Figures \ref{oracle_NVR}-\ref{oracle_SVR}. Figure~\ref{oracle_box} shows the performance of the three heuristics together in box plots. In Figure~\ref{oracle_NVR} we can see that the bottom-up heuristic performs slightly worse for increasing $|V|$, while the top-down heuristic performs slightly better. In Figure~\ref{oracle_LVR} we see that the heuristics perform worse when $\ell$ increases, consistent with the ratios discussed in Section~\ref{section:approximation_algorithms}. In Figure~\ref{oracle_SVR} we show the performance of the heuristics with respect to the stretch factor $t$. In general, the performance becomes worse as $t$ increases.
%, using the classical 2-approximation for \STP \cite{Gilbert1968}, aggregated over all instances with $\ell$ levels. 
%As expected, the performance of the heuristics gets slightly worse as $\ell$ increases. The bottom-up approach had the worst performance, while the composite heuristic performed very well in practice.

 %Figure~\ref{BoxPlots_ND} shows the performance of the four heuristics compared with the optimal solution as a function of terminal selection, either \textsc{linear} or \textsc{exponential}.
 %, also using the 2-approximation for \STP. 
 %Overall, the heuristics performed slightly worse when the sizes of the terminal sets decrease exponentially.

\begin{figure}[htp]
    \centering
    \begin{subfigure}[b]{0.28\textwidth}
        \includegraphics[width=\textwidth]{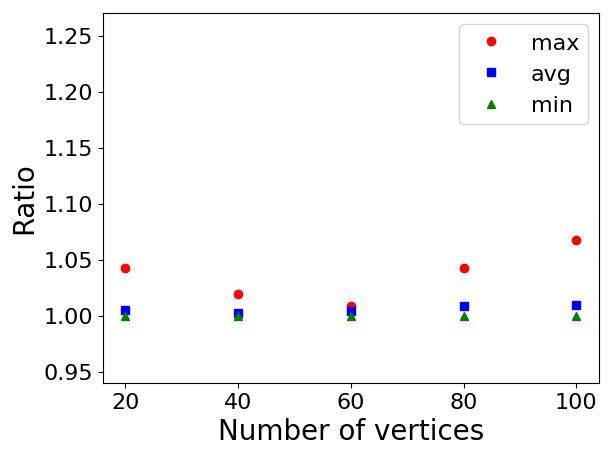}
        \caption{Bottom up}
    \end{subfigure}
    ~
    \begin{subfigure}[b]{0.28\textwidth}
        \includegraphics[width=\textwidth]{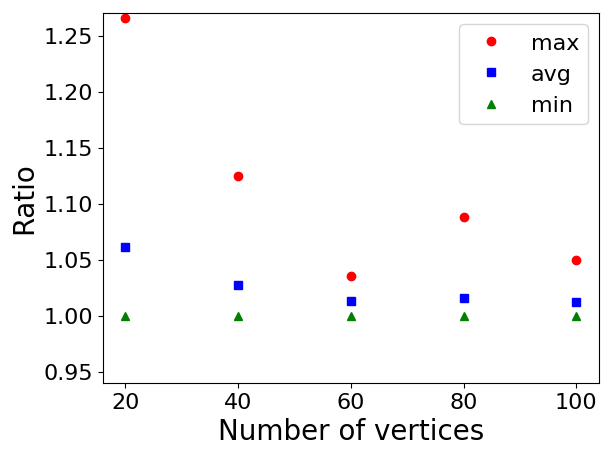}
        \caption{Top down}
    \end{subfigure}
    ~
    \begin{subfigure}[b]{0.28\textwidth}
        \includegraphics[width=\textwidth]{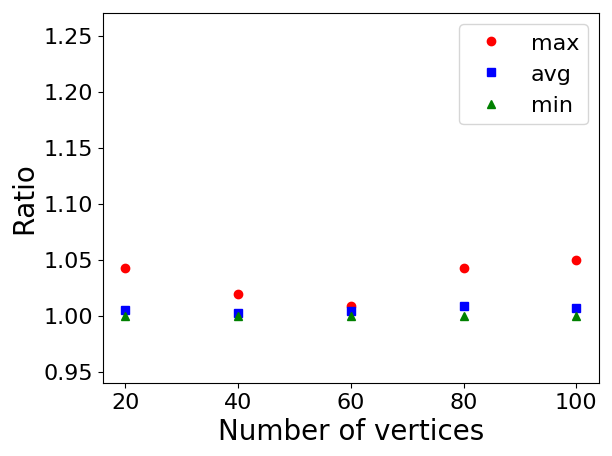}
        \caption{min(BU, TD)}
    \end{subfigure}
    \caption{Performance with oracle on Erd\H{o}s--R{\'e}nyi graphs w.r.t.\ $|V|$. Ratio is defined as the cost of the returned MLGS divided by OPT.} \label{oracle_NVR}
\end{figure}

\begin{figure}[htp]
    \centering
    \begin{subfigure}[b]{0.28\textwidth}
        \includegraphics[width=\textwidth]{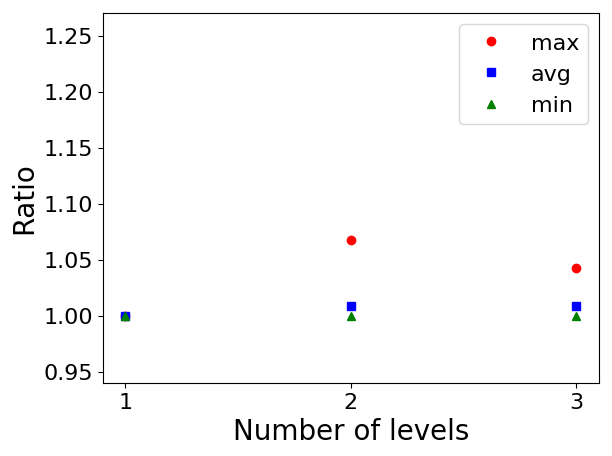}
        \caption{Bottom up}
    \end{subfigure}
    ~
    \begin{subfigure}[b]{0.28\textwidth}
        \includegraphics[width=\textwidth]{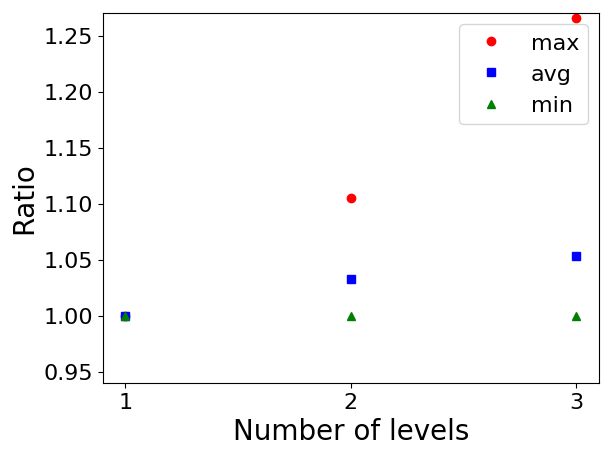}
        \caption{Top down}
    \end{subfigure}
    ~
    \begin{subfigure}[b]{0.28\textwidth}
        \includegraphics[width=\textwidth]{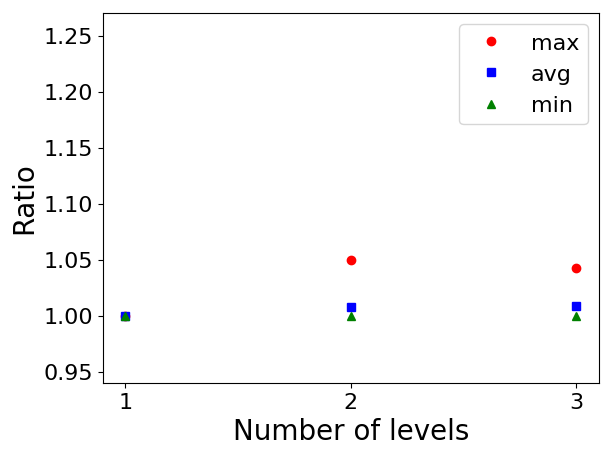}
        \caption{min(BU, TD)}
    \end{subfigure}
    \caption{Performance with oracle on Erd\H{o}s--R{\'e}nyi graphs w.r.t.\ the number of levels} \label{oracle_LVR}
\end{figure}

\begin{figure}[htp]
    \centering
    \begin{subfigure}[b]{0.28\textwidth}
        \includegraphics[width=\textwidth]{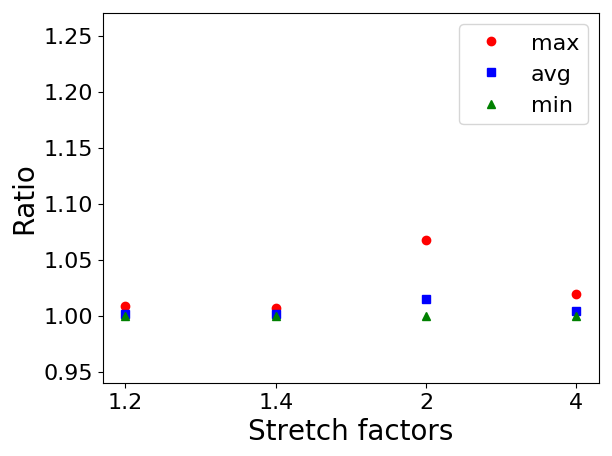}
        \caption{Bottom up}
    \end{subfigure}
    ~
    \begin{subfigure}[b]{0.28\textwidth}
        \includegraphics[width=\textwidth]{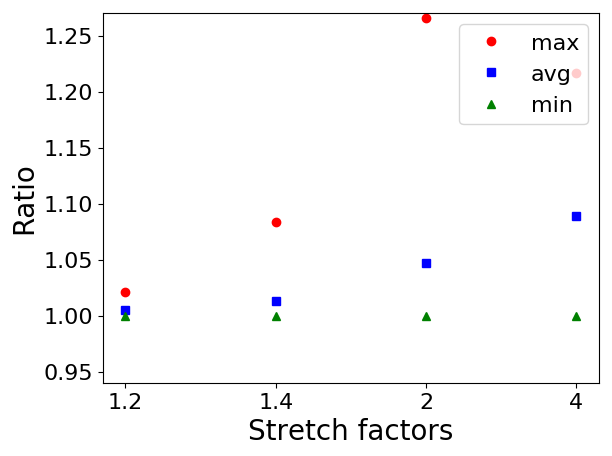}
        \caption{Top down}
    \end{subfigure}
    ~
    \begin{subfigure}[b]{0.28\textwidth}
        \includegraphics[width=\textwidth]{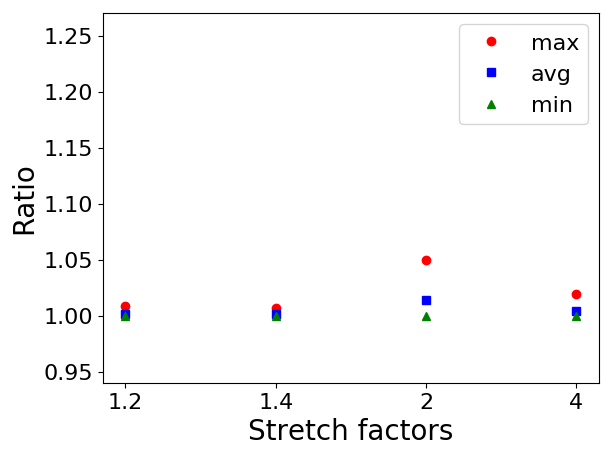}
        \caption{min(BU, TD)}
    \end{subfigure}
    \caption{Performance with oracle on Erd\H{o}s--R{\'e}nyi graphs w.r.t.\ stretch factor} \label{oracle_SVR}
\end{figure}

\begin{figure}[htp]
    \centering
    \begin{subfigure}[b]{0.28\textwidth}
        \includegraphics[width=\textwidth]{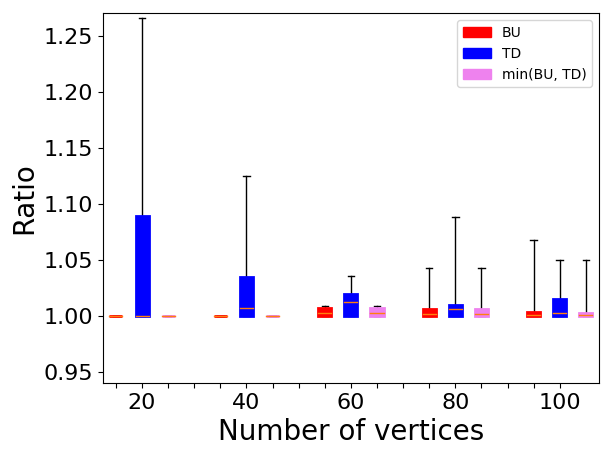}
    \end{subfigure}
    ~
    \begin{subfigure}[b]{0.28\textwidth}
        \includegraphics[width=\textwidth]{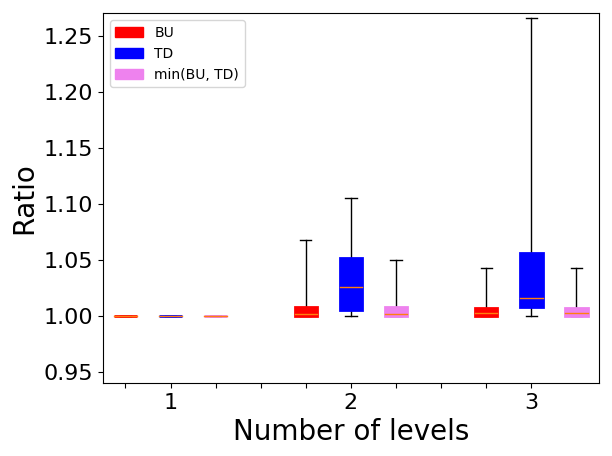}
    \end{subfigure}
    ~
    \begin{subfigure}[b]{0.28\textwidth}
        \includegraphics[width=\textwidth]{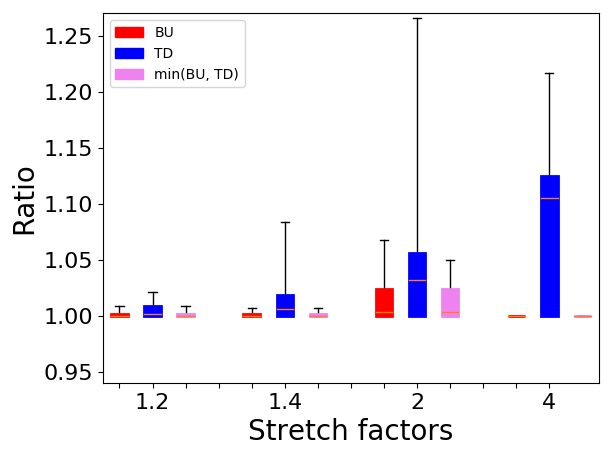}
    \end{subfigure}
    \caption{Performance with oracle on Erd\H{o}s--R{\'e}nyi graphs w.r.t.\ the number of vertices, the number of levels, and the stretch factors} \label{oracle_box}
\end{figure}

The most time consuming part of the experiment is the execution time of the ILP for solving MLGS instances optimally. The running time of the heuristics is significantly smaller compared to that of the ILP. Hence, we first show the running times of the exact solution of the MLGS instances in Figure~\ref{time_box}. We show the running time with respect to the number of vertices $|V|$, number of levels $\ell$, and stretch factors $t$. For all parameters, the running time tends to increase as the size of the parameter increases. %For vertices and levels, this behavior is intuitive because the number of variables and constraints increases as the parameter increases.
In particular, the running time with stretch factor 4 (Fig. \ref{time_box}, right) was much worse, as there are many more $t$-spanner paths to consider, and the size reduction techniques in Section~\ref{size_reduction_techniques} are less effective at reducing instance size. We show the running times of for computing oracle bottom-up, top-down and combined solutions in Figure~\ref{time_box_heu}. 

\begin{figure}[htp]
    \centering
    \begin{subfigure}[b]{0.28\textwidth}
        \includegraphics[width=\textwidth]{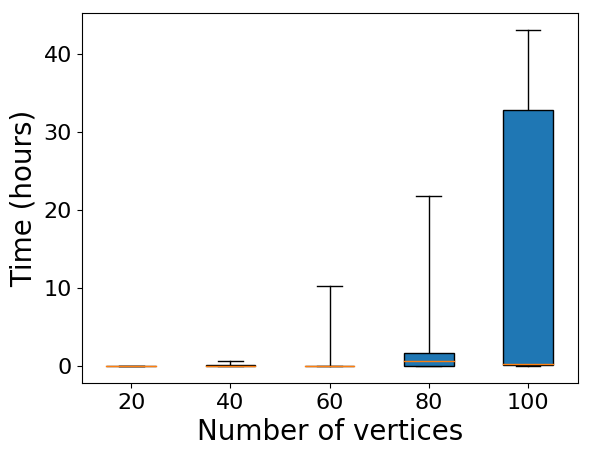}
    \end{subfigure}
    ~
    \begin{subfigure}[b]{0.28\textwidth}
        \includegraphics[width=\textwidth]{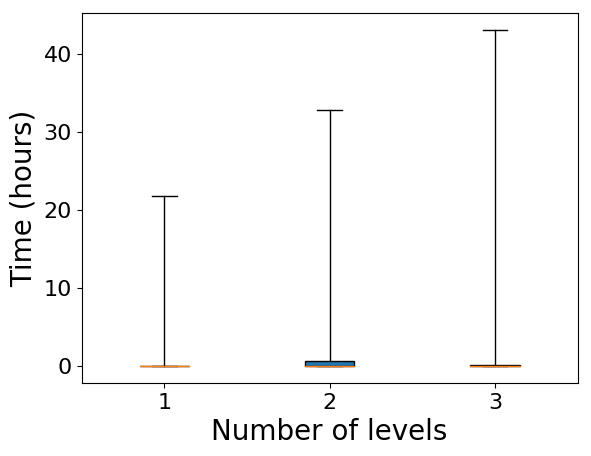}
    \end{subfigure}
    ~
    \begin{subfigure}[b]{0.28\textwidth}
        \includegraphics[width=\textwidth]{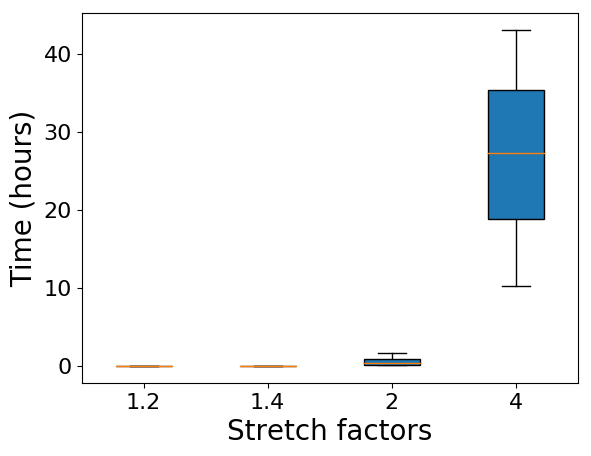}
    \end{subfigure}
    \caption{Experimental running times for computing exact solutions on Erd\H{o}s--R{\'e}nyi graphs w.r.t.\ the number of vertices, the number of levels, and the stretch factors} \label{time_box}
\end{figure}

\begin{figure}[htp]
    \centering
    \begin{subfigure}[b]{0.28\textwidth}
        \includegraphics[width=\textwidth]{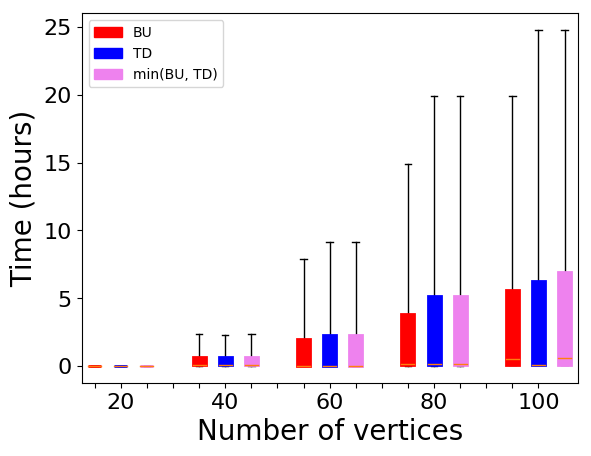}
    \end{subfigure}
    ~
    \begin{subfigure}[b]{0.28\textwidth}
        \includegraphics[width=\textwidth]{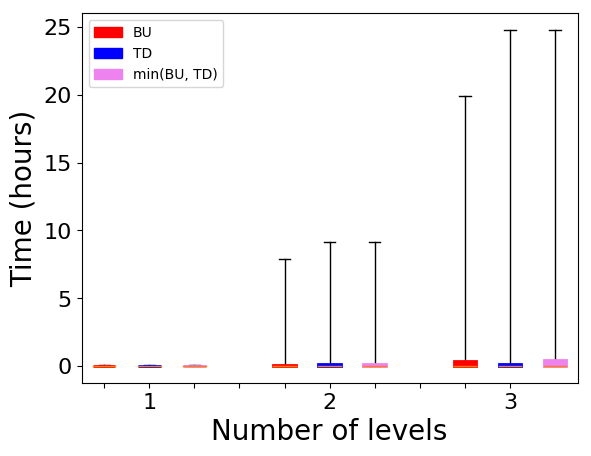}
    \end{subfigure}
    ~
    \begin{subfigure}[b]{0.28\textwidth}
        \includegraphics[width=\textwidth]{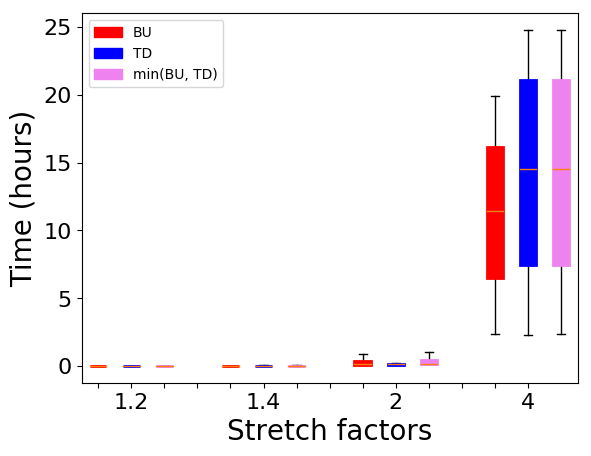}
    \end{subfigure}
    \caption{Experimental running times for computing oracle bottom-up, top-down and combined solutions on Erd\H{o}s--R{\'e}nyi graphs w.r.t.\ the number of vertices, the number of levels, and the stretch factors} \label{time_box_heu}
\end{figure}

% ILP when the graph size is very large as is the case in most real-world networks. 
The ILP is too computationally expensive for larger input sizes and this is where the heurstic can be particularly useful. We now consider a similar experiment using the heuristic to compute subsetwise spanners, as described in Section \ref{heuristic_subsetwise_spanners}. We show the impact of different parameters (number of vertices $|V|$, number of levels $\ell$, and stretch factors $t$) using scatter plots for three heuristics separately in Figures \ref{approx_NVR}--\ref{approx_SVR}.  Figure~\ref{approx_box} shows the performance of the three heuristics together in box plots. We can see that the heuristics perform very well in practice. Notably when the heuristic is used in place of the ILP (Fig \ref{consistent_approx_box_ER}), the running times decrease for larger stretch factors.

\begin{figure}[htp]
    \centering
    \begin{subfigure}[b]{0.28\textwidth}
        \includegraphics[width=\textwidth]{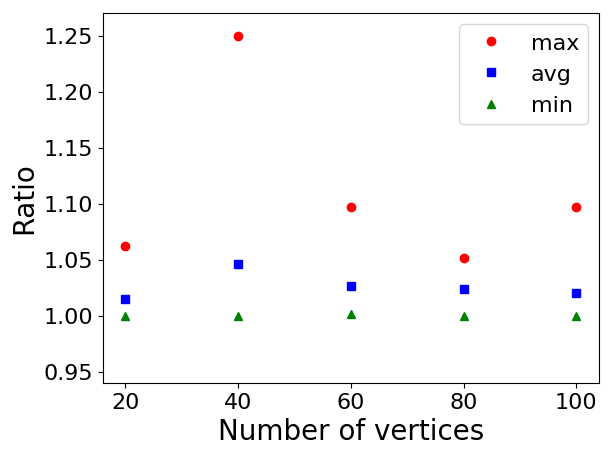}
        \caption{Bottom up}
    \end{subfigure}
    ~
    \begin{subfigure}[b]{0.28\textwidth}
        \includegraphics[width=\textwidth]{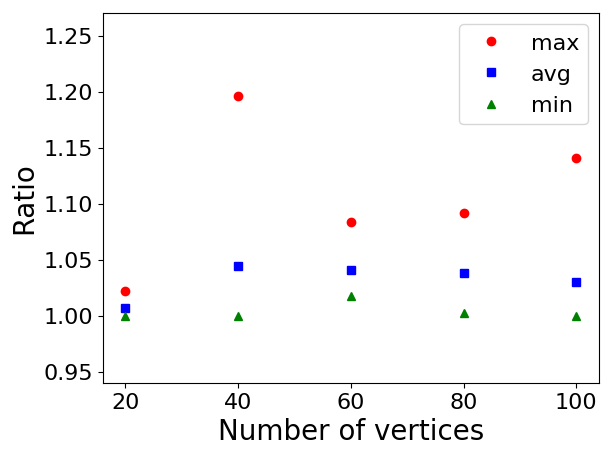}
        \caption{Top down}
    \end{subfigure}
    ~
    \begin{subfigure}[b]{0.28\textwidth}
        \includegraphics[width=\textwidth]{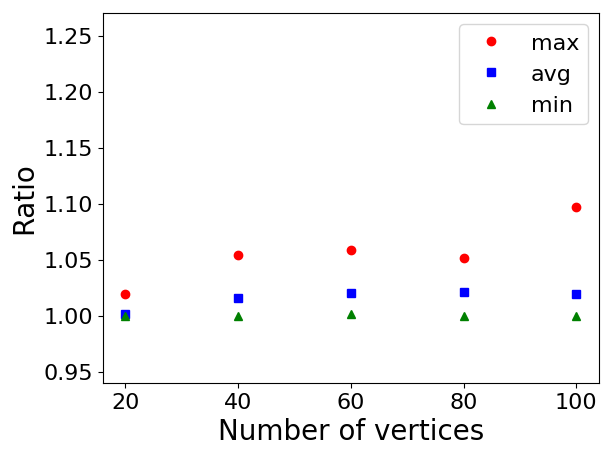}
        \caption{min(BU, TD)}
    \end{subfigure}
    \caption{Performance without oracle on Erd\H{o}s--R{\'e}nyi graphs w.r.t.\ $|V|$} \label{approx_NVR}
\end{figure}

\begin{figure}[htp]
    \centering
    \begin{subfigure}[b]{0.28\textwidth}
        \includegraphics[width=\textwidth]{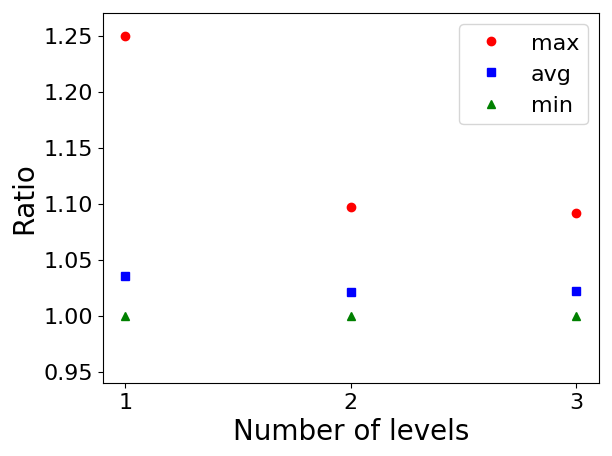}
        \caption{Bottom up}
    \end{subfigure}
    ~
    \begin{subfigure}[b]{0.28\textwidth}
        \includegraphics[width=\textwidth]{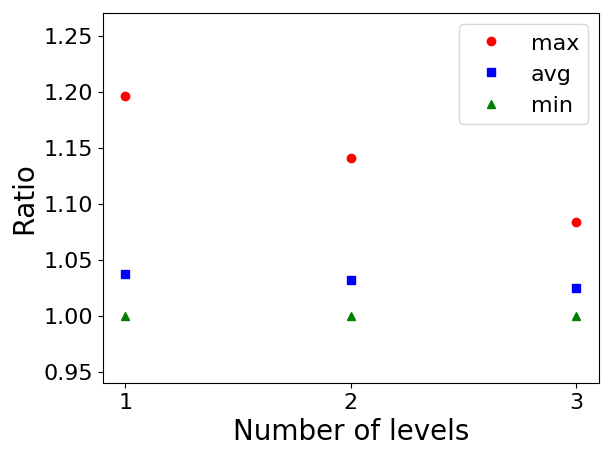}
        \caption{Top down}
    \end{subfigure}
    ~
    \begin{subfigure}[b]{0.28\textwidth}
        \includegraphics[width=\textwidth]{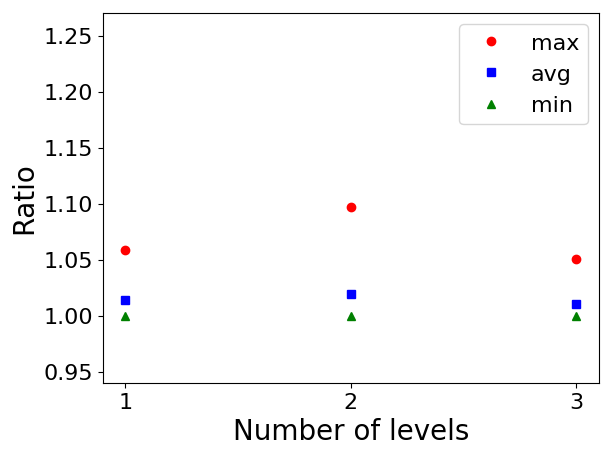}
        \caption{min(BU, TD)}
    \end{subfigure}
    \caption{Performance without oracle on Erd\H{o}s--R{\'e}nyi graphs w.r.t.\ the number of levels} \label{approx_LVR}
\end{figure}

\begin{figure}[htp]
    \centering
    \begin{subfigure}[b]{0.28\textwidth}
        \includegraphics[width=\textwidth]{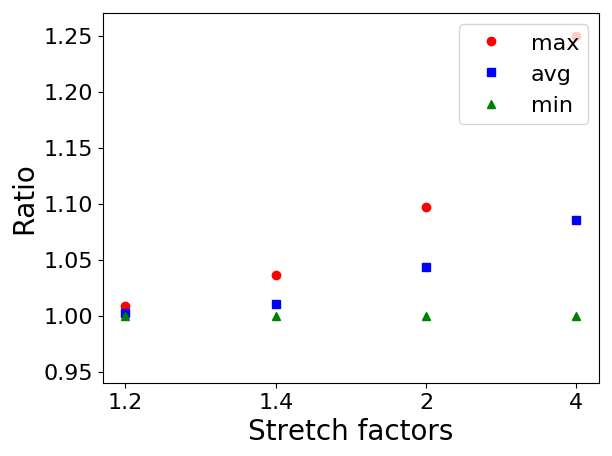}
        \caption{Bottom up}
    \end{subfigure}
    ~
    \begin{subfigure}[b]{0.28\textwidth}
        \includegraphics[width=\textwidth]{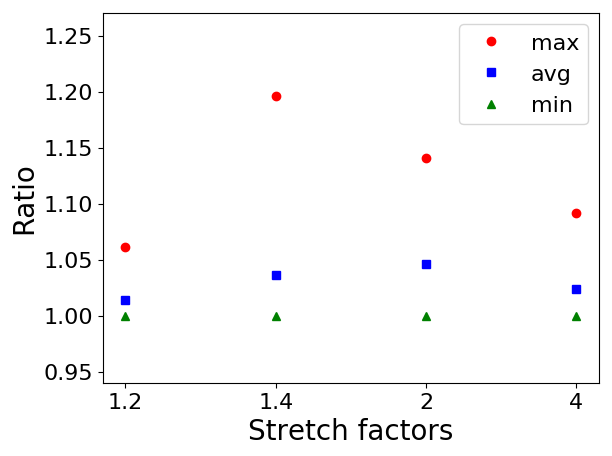}
        \caption{Top down}
    \end{subfigure}
    ~
    \begin{subfigure}[b]{0.28\textwidth}
        \includegraphics[width=\textwidth]{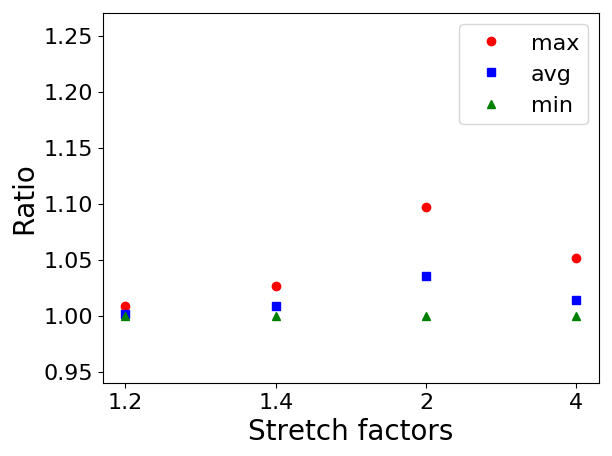}
        \caption{min(BU, TD)}
    \end{subfigure}
    \caption{Performance without oracle on Erd\H{o}s--R{\'e}nyi graphs w.r.t.\ the stretch factors} \label{approx_SVR}
\end{figure}

\begin{figure}[htp]
    \centering
    \begin{subfigure}[b]{0.28\textwidth}
        \includegraphics[width=\textwidth]{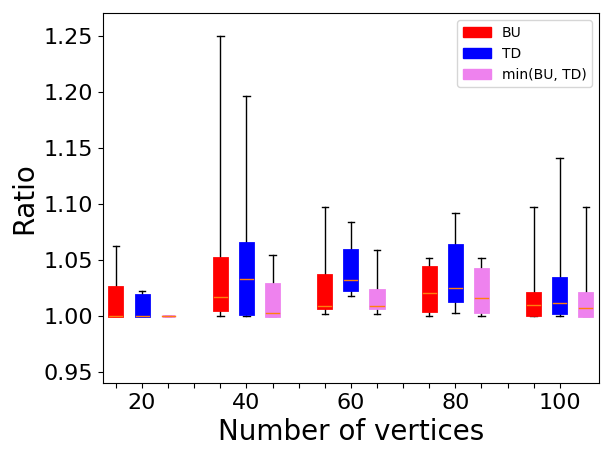}
    \end{subfigure}
    ~
    \begin{subfigure}[b]{0.28\textwidth}
        \includegraphics[width=\textwidth]{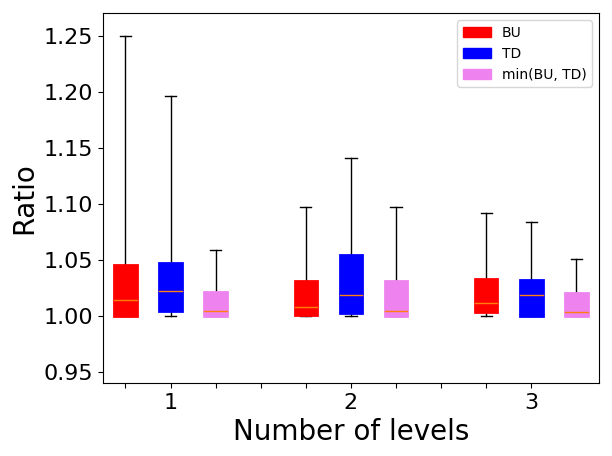}
    \end{subfigure}
    ~
    \begin{subfigure}[b]{0.28\textwidth}
        \includegraphics[width=\textwidth]{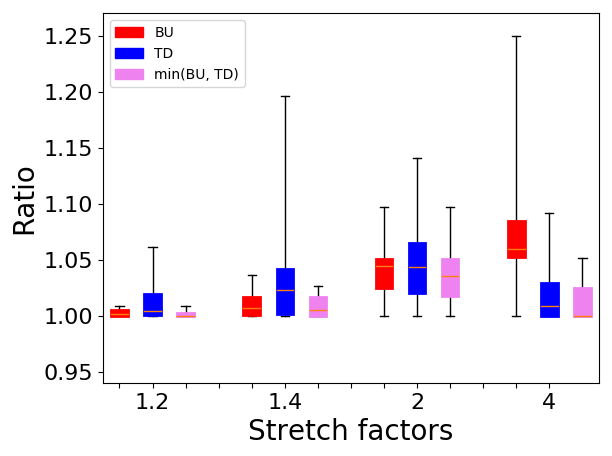}
    \end{subfigure}
    \caption{Performance without oracle on Erd\H{o}s--R{\'e}nyi graphs w.r.t.\ the number of vertices, the number of levels, and the stretch factors} \label{approx_box}
\end{figure}

\begin{figure}[htp]
    \centering
    \begin{subfigure}[b]{0.28\textwidth}
        \includegraphics[width=\textwidth]{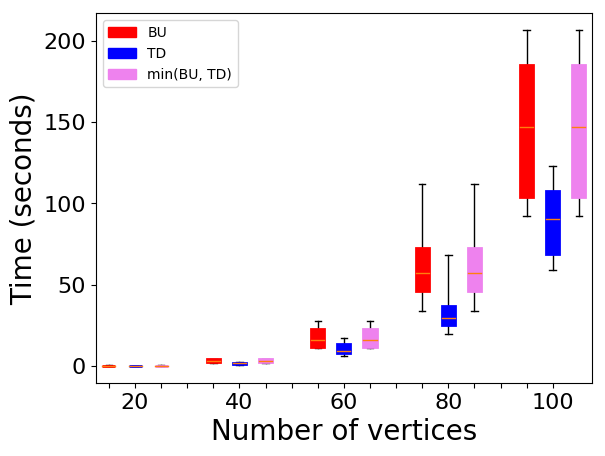}
    \end{subfigure}
    ~
    \begin{subfigure}[b]{0.28\textwidth}
        \includegraphics[width=\textwidth]{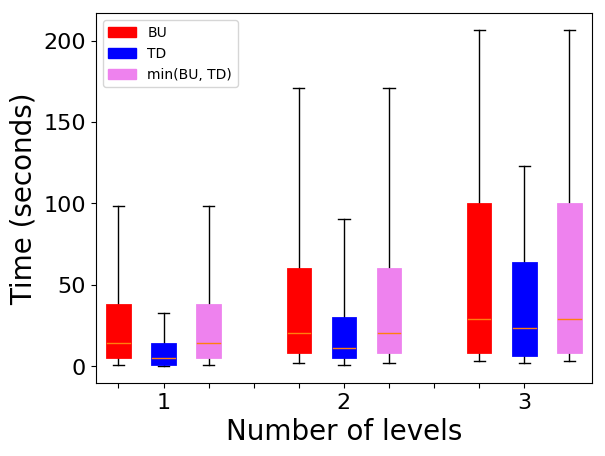}
    \end{subfigure}
    ~
    \begin{subfigure}[b]{0.28\textwidth}
        \includegraphics[width=\textwidth]{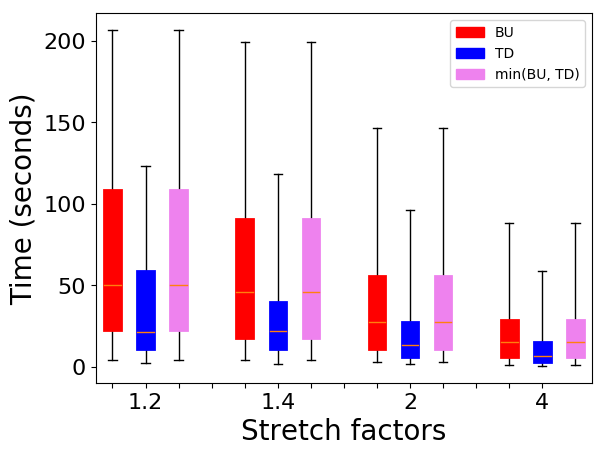}
    \end{subfigure}
    \caption{Experimental running times for computing heuristic bottom-up, top-down and combined solutions on Erd\H{o}s--R{\'e}nyi graphs w.r.t.\ the number of vertices, the number of levels, and the stretch factors} \label{approx_time_box_heu}
\end{figure}

%%%%%%%%%%%%%%%%%%%%%%%%%%%%%%%%%%%%%%%%%%%%%%%%%%%%%%%%%%%%%%%%%%%%%%

We also analyzed graphs generated from the Watts--Strogatz model and the results are shown in Appendix~\ref{apdx:ExperimentalResults}.

Our final experiments test the heuristic performance on a set of larger graphs. We generated the graphs using the Erd\H{o}s--R{\'e}nyi model, with $|V| \in \{100, 200, 300, 400\}$. We evaluated more levels ($\ell \in \{2, 4, 6, 8, 10\}$) with stretch factors $t \in \{1.2, 1.4, 2, 4\}$. We show the performance of heuristic bottom-up and top-down in Appendix \ref{apdx:ER-large}. Here, the ratio is determined by dividing the BU or TD cost by $\min(BU,TD)$ (as computing the optimal MLGS would be too time-consuming). The results indicate that while running times increase with larger input graphs, the number of levels and the stretch factors seem to have little impact on performance.

\section{Discussion and conclusion}
We introduced a generalization of the subsetwise spanner problem to multiple levels or grades of service. Our proposed ILP formulation requires only a polynomial size of variables and constraints, which is an improvement over the previous formulation given by Sigurd and Zachariasen~\cite{sigrd04}. We also proposed improved formulations which  work well for small values of the stretch factor $t$. It would be worthwhile to consider whether even better ILP formulations can be found for computing graph spanners and their multi-level variants. We showed that both  the approximation algorithms and the heuristics work well in practice on several different types of graphs, with different  number of levels and different stretch factors. 

We only considered a stretch factor $t$  that is the same for all levels in the multi-level spanner, as well as a fairly specific definition of cost. 
It would be interesting to investigate more general multi-level or grade-of-service spanner problems, including ones with varying stretch factors (e.g., in which more important terminals require a smaller or larger stretch factors), different definitions of cost, and spanners with other requirements, such as bounded diameters or degrees.

%end of document

%
% ---- Bibliography ----
%
% BibTeX users should specify bibliography style 'splncs04'.
% References will then be sorted and formatted in the correct style.
%
\bibliographystyle{splncs04}
\bibliography{references}

\appendix
\section{Proof of Theorem \ref{thm:tdbu}} \label{apdx:tdbu}
\begin{proof}
We use the simple algebraic fact that $\min \{x,y\} \le \alpha x + (1-\alpha)y$ for all $x,y \in \mathbb{R}$ and $\alpha \in [0,1]$. Here, we can also use the fact that $\MIN_1 \le \OPT_1 + \OPT_2 + \ldots + \OPT_{\ell}$, as the RHS equals the cost of $G_1^*$, which is some subsetwise $(T_1 \times T_1)$-spanner. Combining, we have
\begin{align*}
    \min(\TOP, \BOT) &\le \alpha \sum_{i=1}^{\ell} \frac{i(i+1)}{2} \OPT_i + (1-\alpha)\ell \sum_{i=1}^{\ell} \OPT_i \\
     &= \sum_{i=1}^{\ell} \left[ \left(\frac{i(i+1)}{2} - \ell\right)\alpha + \ell \right] \rho \OPT_i \\
\end{align*}

Since we are comparing $\min\{\TOP, \BOT\}$ to $r \cdot \OPT$ for some approximation ratio $r > 1$, we can compare coefficients and find the smallest $r \ge 1$ such that the system of inequalities
	\begin{align*}
	\left(\frac{\ell(\ell+1)}{2} - \ell\right) \alpha + \ell \rho &\le \ell r \\
	\left(\frac{(\ell-1)\ell}{2} - \ell\right) \alpha + \ell \rho &\le (\ell-1)r \\[-0.8em]
	&\vdots \\[-0.8em]
	\left(\frac{2 \cdot 1}{2} - \ell\right) \alpha + \ell \rho &\le r
	\end{align*}
	has a solution $\alpha \in [0,1]$. 
	Adding the first inequality to $\ell/2$ times the last inequality yields $\frac{\ell^2 + 2\ell}{2} \le \frac{3\ell r}{2}$, or $r \ge \frac{\ell+2}{3}$.  
	Also, it can be shown algebraically that $(r, \alpha) = (\frac{\ell+2}{3}, \frac{2}{3})$ simultaneously satisfies the above inequalities. 
	This implies that $\min\{\TOP, \BOT\} \le \frac{\ell+2}{3}\rho \cdot \OPT$.\hfill$\square$
\end{proof}
 \section{Proof of Theorem \ref{THM:tSILP}} \label{apdx:tSILP}%single level subsetwise spanner
\begin{proof}
Let $H^*$ denote an optimal pairwise spanner of $G$ with stretch factor $t$, and let $\OPT$ denote the cost of $H^*$. Let $\OPT_{ILP}$ denote the minimum cost of the objective in the ILP \eqref{eqn:ilp-obj}. First, given a minimum cost $t$--spanner $H^*(V,E^*)$, a solution to the ILP can be constructed as follows:
for each edge $e \in E^*$, set $x_e = 1$. Then for each unordered pair $(u,v) \in K$ with $u < v$, compute a shortest path $p_{uv}$ from $u$ to $v$ in $H^*$, and set $x_{(i,j)}^{uv} = 1$ for each edge along this path, and $x_{(i,j)}^{uv} = 0$ if $(i,j)$ is not on $p_{uv}$.

As each shortest path $p_{uv}$ necessarily has cost $\le t \cdot d_G(u,v)$, constraint (\ref{eqn:ilp-1}) is satisfied. Constraints \eqref{eqn:ilp-2}--\eqref{eqn:ilp-3} are satisfied as $p_{uv}$ is a simple $u$-$v$ path. Constraint \eqref{eqn:ilp-4} also holds, as $p_{uv}$ should not traverse the same edge twice in opposite directions. In particular, every edge in $H^*$ appears on some shortest path; otherwise, removing such an edge yields a pairwise spanner of lower cost. Hence $\OPT_{ILP} \le \OPT$.

Conversely, an optimal solution to the ILP induces a feasible $t$--spanner $H$. Consider an unordered pair $(u,v) \in K$ with $u < v$, and the set of decision variables satisfying $x_{(i,j)}^{uv} = 1$. By \eqref{eqn:ilp-2} and \eqref{eqn:ilp-3}, these chosen edges form a simple path from $u$ to $v$. The sum of the weights of these edges is at most $t \cdot d_G(u,v)$ by \eqref{eqn:ilp-1}. Then by constraint \eqref{eqn:ilp-4}, the chosen edges corresponding to $(u,v)$ appear in the spanner, which is induced by the set of edges $e$ with $x_e = 1$. Hence $\OPT \le \OPT_{ILP}$.

Combining the above observations, we see that $\OPT=\OPT_{ILP}$.\hfill$\square$
\end{proof}

\section{Proof of Theorem \ref{THM:MLGSILP}} \label{apdx:MLGSILP}
\begin{proof}
Given an optimal solution to the ILP with cost $\OPT_{ILP}$, construct an MLGS by letting $G_i = (V, E_i)$ where $E_i = \{e \in E \mid y_e \ge i\}$. This clearly gives a nested sequence of subgraphs. Let $u$ and $v$ be terminals in $T_i$ (not necessarily of required grade $R(\cdot) = i$), with $u < v$, and consider the set of all variables of the form $x_{(i,j)}^{uv}$ equal to 1. By \eqref{eqn:mlgs-ilp-1}--\eqref{eqn:mlgs-ilp-3}, these selected edges form a path from $u$ to $v$ of length at most $t \cdot d_G(u,v)$, while constraints \eqref{eqn:mlgs-ilp-4}--\eqref{eqn:mlgs-ilp-5} imply that these selected edges have grade at least $m_{uv} \ge i$, so the selected path is contained in $E_i$. Hence $G_i$ is a subsetwise $(T_i \times T_i)$--spanner for $G$ with stretch factor $t$, and the optimal ILP solution gives a feasible MLGS.

Given an optimal MLGS with cost $\OPT$, we can construct a feasible ILP solution with the same cost in a way similar to the proof of Theorem \ref{THM:tSILP}. For each $u,v \in T_1$ with $u < v$, set $m_{uv} = \min(R(u), R(v))$. Compute a shortest path in $G_{m_{uv}}$ from $u$ to $v$, and set $x_{(i,j)}^{uv} = 1$ for all edges along this path. Then for each $e \in E$, consider all pairs $(u_1, v_1), \ldots, (u_k,v_k)$ that use either $(i,j)$ or $(j,i)$, and set $y_e = \max(m_{u_1v_1}, m_{u_2v_2}, \ldots, m_{u_kv_k})$. In particular, $y_e$ is not larger than the grade of $e$ in the MLGS, otherwise this would imply $e$ is on some $u$-$v$ path at grade greater than its grade of service in the actual solution.\hfill$\square$
\end{proof}

\section{Experimental results on graphs generated using Watts-Strogatz} \label{apdx:ExperimentalResults}

The results for graphs generated from the Watts--Strogatz model are shown in Figures \ref{oracle_NVR_WS}--\ref{approx_box_WS}, which are organized in the same way as for Erd\H{o}s--R\'enyi. %Figures \ref{oracle_NVR_WS}--\ref{oracle_box_WS} show the results of the heuristics using the subsetwise spanner ILP in every level which we refer as oracle. We show the impact of different parameters (number of vertices $|V|$, number of levels $\ell$, and stretch factors $t$) using scatter plots for three heuristics separately in Figures \ref{oracle_NVR_WS}--\ref{oracle_SVR_WS}. Figure~\ref{oracle_box} shows the performance of the three heuristics together in box plots. We show the running times of the exact solution of the MLGS instances in Figure~\ref{time_box_WS}. Finally, we show the performance of all heuristics using the heuristic to solve subsetwise spanner described in Section \ref{heuristic_subsetwise_spanners}. We show the impact of different parameters (number of vertices $|V|$, number of levels $\ell$, and stretch factors $t$) using scatter plots for the three heuristics separately in Figures \ref{approx_NVR_WS}--\ref{approx_SVR_WS}.

\begin{figure}[htp]
    \centering
    \begin{subfigure}[b]{0.28\textwidth}
        \includegraphics[width=\textwidth]{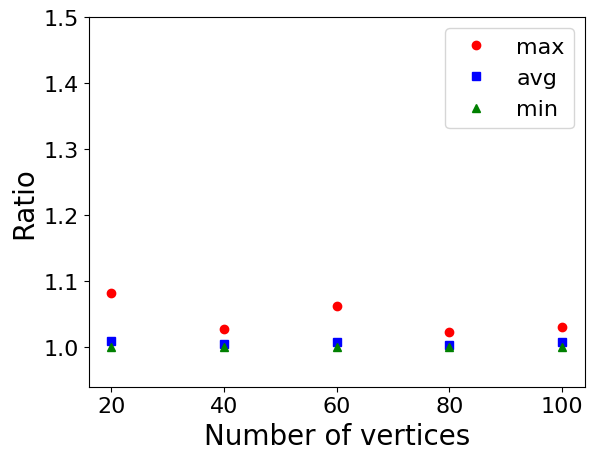}
        \caption{Bottom up}
    \end{subfigure}
    ~
    \begin{subfigure}[b]{0.28\textwidth}
        \includegraphics[width=\textwidth]{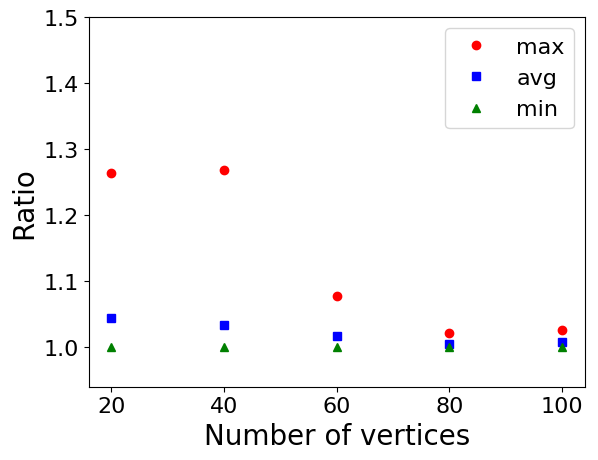}
        \caption{Top down}
    \end{subfigure}
    ~
    \begin{subfigure}[b]{0.28\textwidth}
        \includegraphics[width=\textwidth]{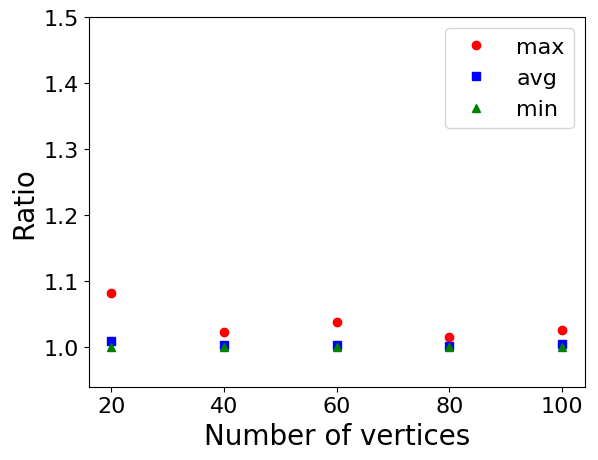}
        \caption{min(BU, TD)}
    \end{subfigure}
    \caption{Performance with oracle on Watts--Strogatz graphs w.r.t.\ the number of vertices} \label{oracle_NVR_WS}
\end{figure}

\begin{figure}[htp]
    \centering
    \begin{subfigure}[b]{0.28\textwidth}
        \includegraphics[width=\textwidth]{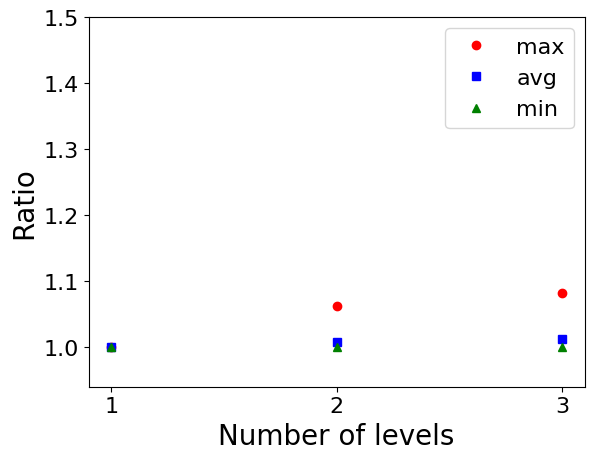}
        \caption{Bottom up}
    \end{subfigure}
    ~
    \begin{subfigure}[b]{0.28\textwidth}
        \includegraphics[width=\textwidth]{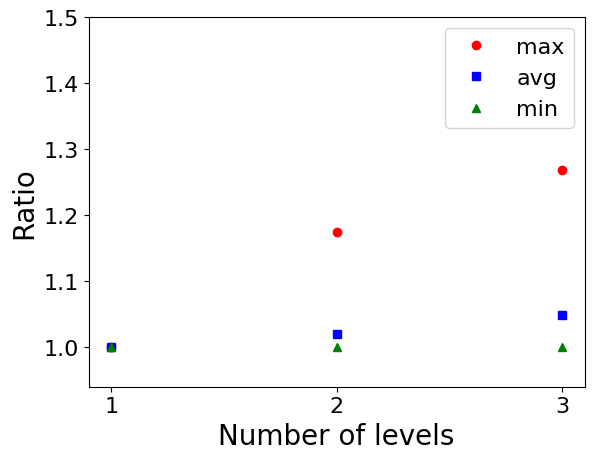}
        \caption{Top down}
    \end{subfigure}
    ~
    \begin{subfigure}[b]{0.28\textwidth}
        \includegraphics[width=\textwidth]{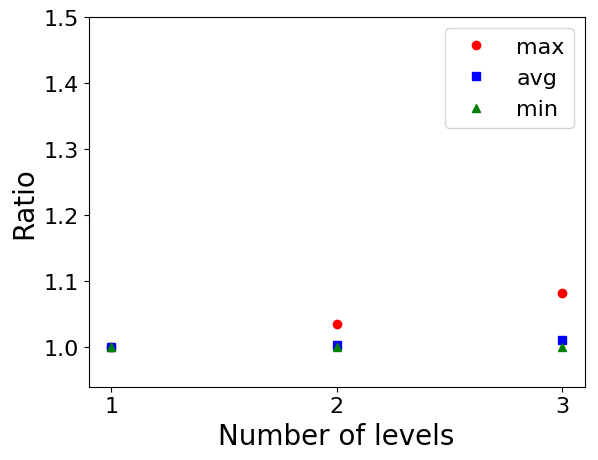}
        \caption{min(BU, TD)}
    \end{subfigure}
    \caption{Performance with oracle on Watts--Strogatz graphs w.r.t.\ the number of levels} \label{oracle_LVR_WS}
\end{figure}

\begin{figure}[htp]
    \centering
    \begin{subfigure}[b]{0.28\textwidth}
        \includegraphics[width=\textwidth]{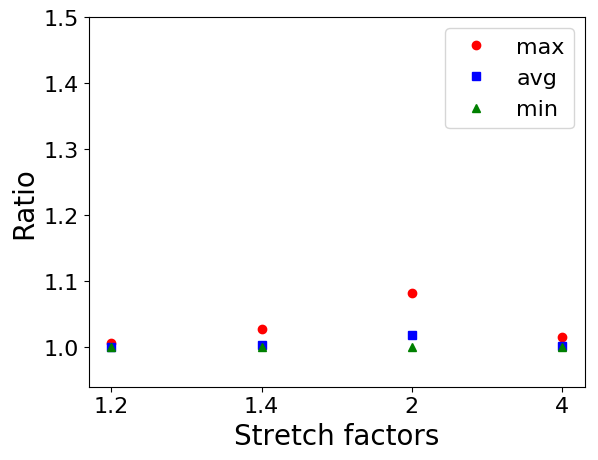}
        \caption{Bottom up}
    \end{subfigure}
    ~
    \begin{subfigure}[b]{0.28\textwidth}
        \includegraphics[width=\textwidth]{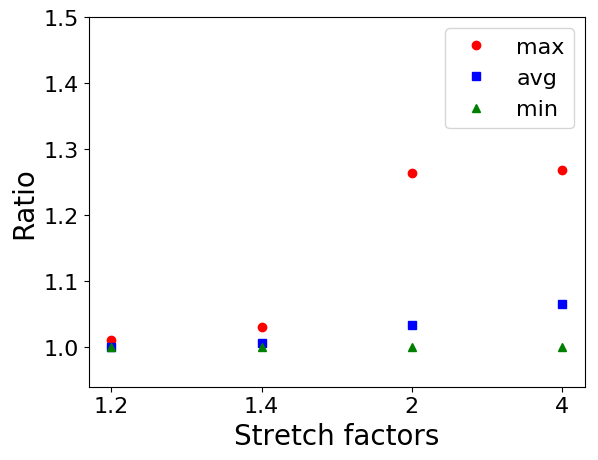}
        \caption{Top down}
    \end{subfigure}
    ~
    \begin{subfigure}[b]{0.28\textwidth}
        \includegraphics[width=\textwidth]{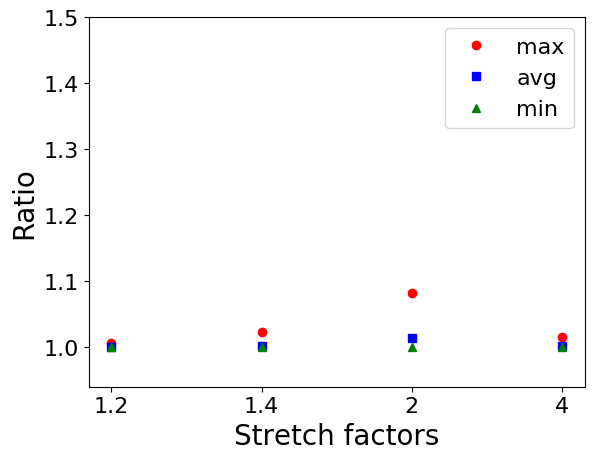}
        \caption{min(BU, TD)}
    \end{subfigure}
    \caption{Performance with oracle on Watts--Strogatz graphs w.r.t.\ the stretch factors} \label{oracle_SVR_WS}
\end{figure}

\begin{figure}[htp]
    \centering
    \begin{subfigure}[b]{0.28\textwidth}
        \includegraphics[width=\textwidth]{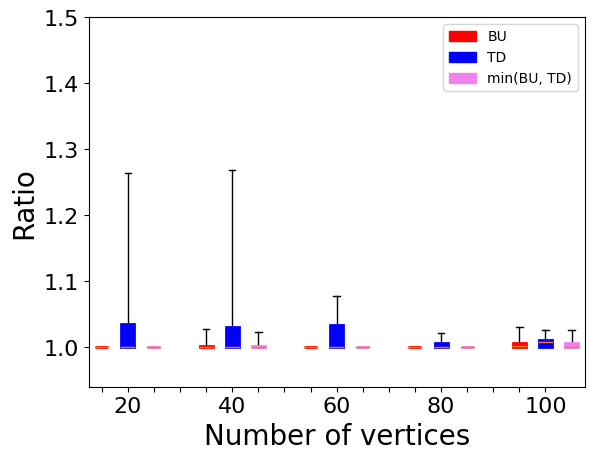}
    \end{subfigure}
    ~
    \begin{subfigure}[b]{0.28\textwidth}
        \includegraphics[width=\textwidth]{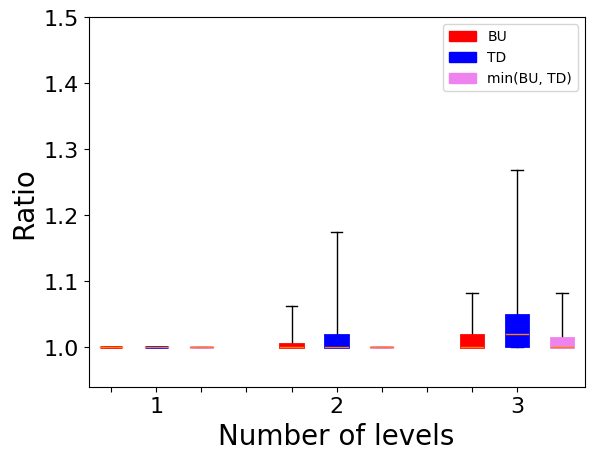}
    \end{subfigure}
    ~
    \begin{subfigure}[b]{0.28\textwidth}
        \includegraphics[width=\textwidth]{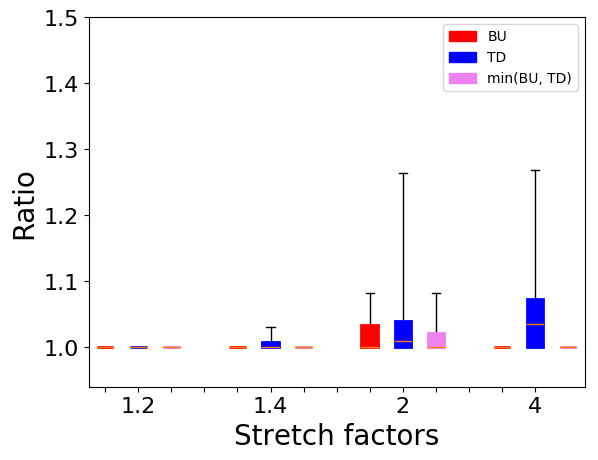}
    \end{subfigure}
    \caption{Performance with oracle on Watts--Strogatz graphs w.r.t.\ the number of vertices, the number of levels, and the stretch factors} \label{oracle_box_WS}
\end{figure}

\begin{figure}[htp]
    \centering
    \begin{subfigure}[b]{0.28\textwidth}
        \includegraphics[width=\textwidth]{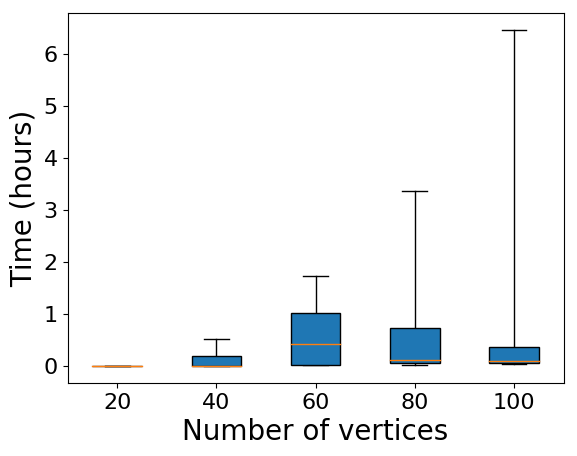}
    \end{subfigure}
    ~
    \begin{subfigure}[b]{0.28\textwidth}
        \includegraphics[width=\textwidth]{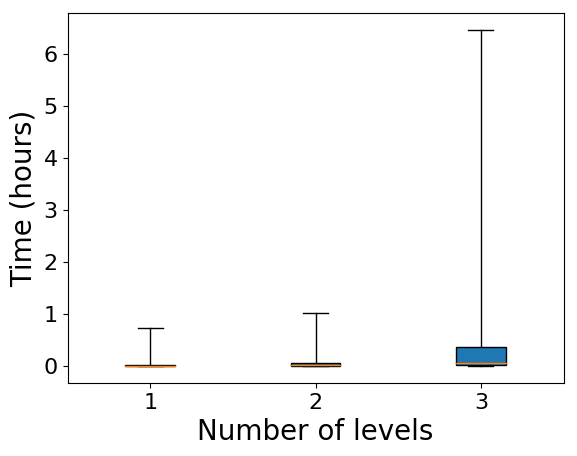}
    \end{subfigure}
    ~
    \begin{subfigure}[b]{0.28\textwidth}
        \includegraphics[width=\textwidth]{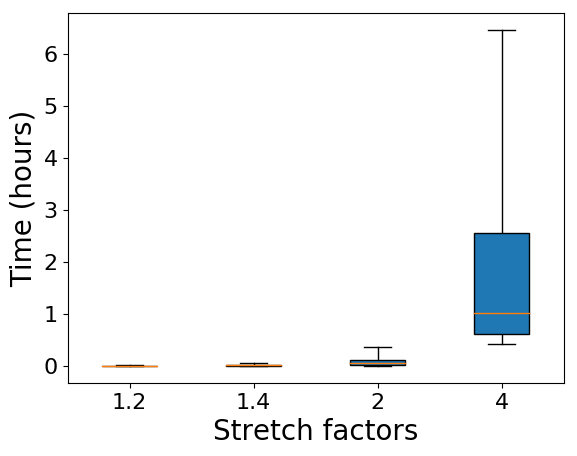}
    \end{subfigure}
    \caption{Experimental running times for computing exact solutions on Watts--Strogatz graphs w.r.t.\ the number of vertices, the number of levels, and the stretch factors} \label{time_box_WS}
\end{figure}

\begin{figure}[htp]
    \centering
    \begin{subfigure}[b]{0.28\textwidth}
        \includegraphics[width=\textwidth]{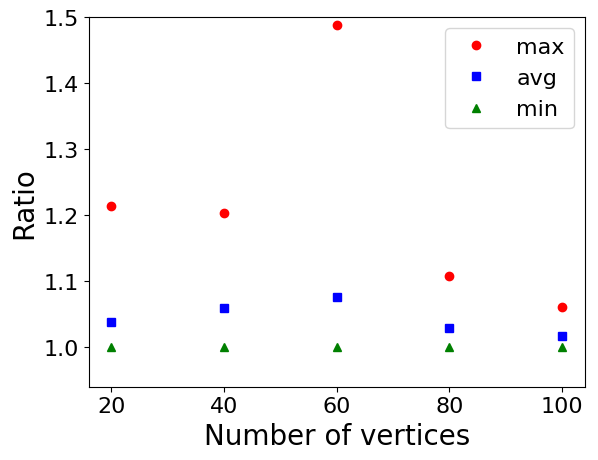}
        \caption{Bottom up}
    \end{subfigure}
    ~
    \begin{subfigure}[b]{0.28\textwidth}
        \includegraphics[width=\textwidth]{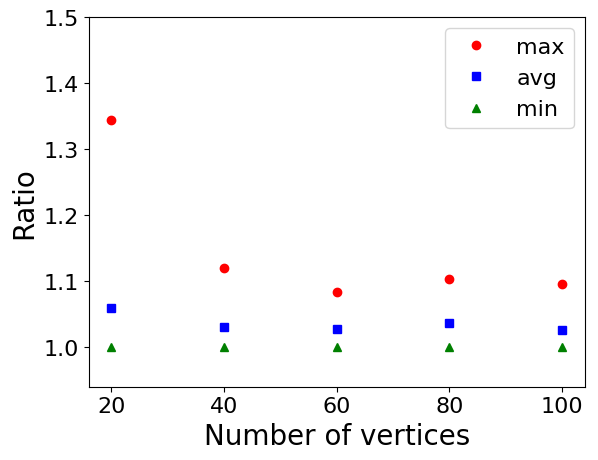}
        \caption{Top down}
    \end{subfigure}
    ~
    \begin{subfigure}[b]{0.28\textwidth}
        \includegraphics[width=\textwidth]{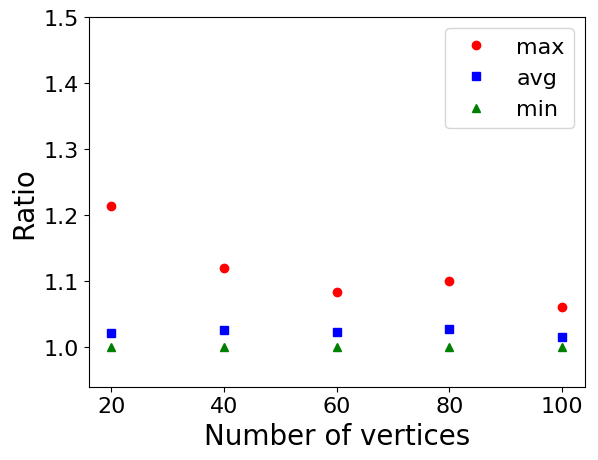}
        \caption{min(BU, TD)}
    \end{subfigure}
    \caption{Performance without oracle on Watts--Strogatz graphs w.r.t.\ the number of vertices} \label{approx_NVR_WS}
\end{figure}

\begin{figure}[htp]
    \centering
    \begin{subfigure}[b]{0.28\textwidth}
        \includegraphics[width=\textwidth]{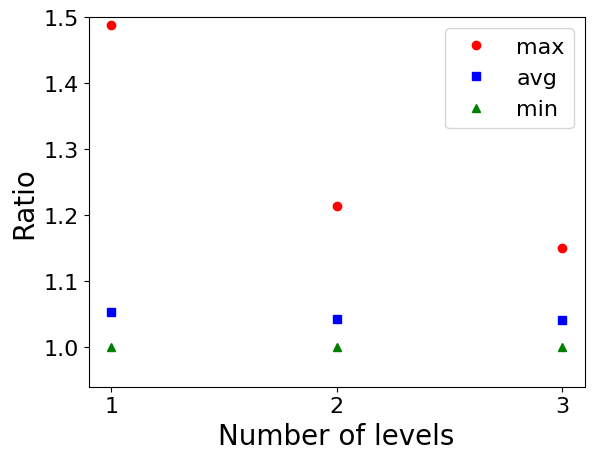}
        \caption{Bottom up}
    \end{subfigure}
    ~
    \begin{subfigure}[b]{0.28\textwidth}
        \includegraphics[width=\textwidth]{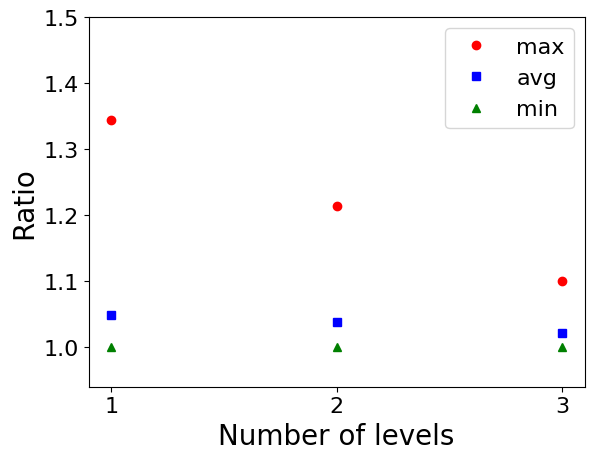}
        \caption{Top down}
    \end{subfigure}
    ~
    \begin{subfigure}[b]{0.28\textwidth}
        \includegraphics[width=\textwidth]{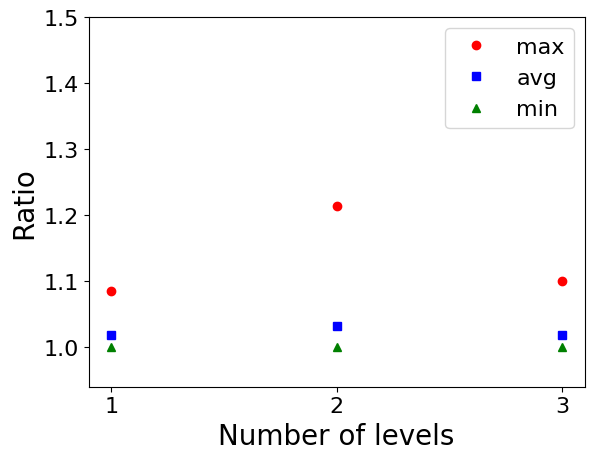}
        \caption{min(BU, TD)}
    \end{subfigure}
    \caption{Performance without oracle on Watts--Strogatz graphs w.r.t.\ the number of levels} \label{approx_LVR_WS}
\end{figure}

\begin{figure}[htp]
    \centering
    \begin{subfigure}[b]{0.28\textwidth}
        \includegraphics[width=\textwidth]{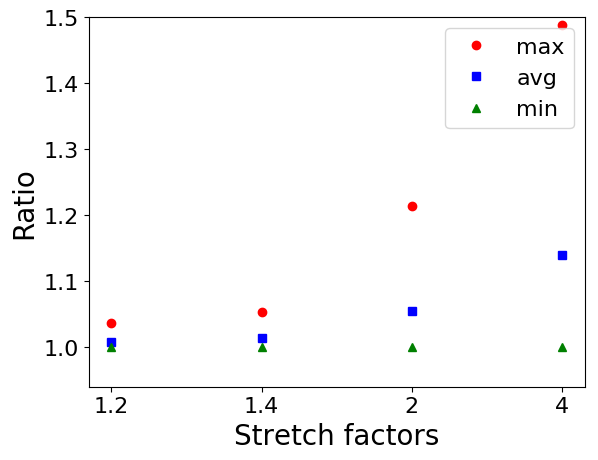}
        \caption{Bottom up}
    \end{subfigure}
    ~
    \begin{subfigure}[b]{0.28\textwidth}
        \includegraphics[width=\textwidth]{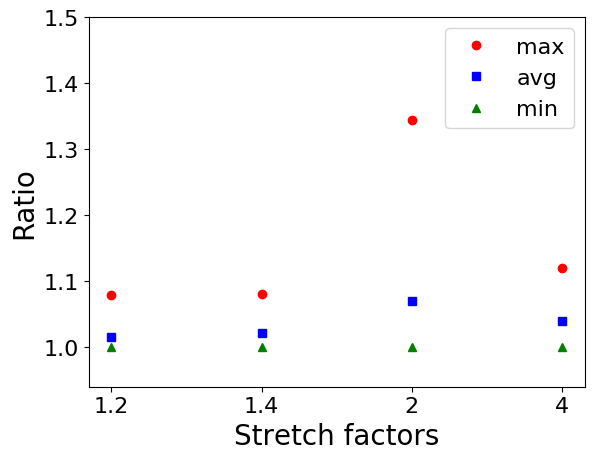}
        \caption{Top down}
    \end{subfigure}
    ~
    \begin{subfigure}[b]{0.28\textwidth}
        \includegraphics[width=\textwidth]{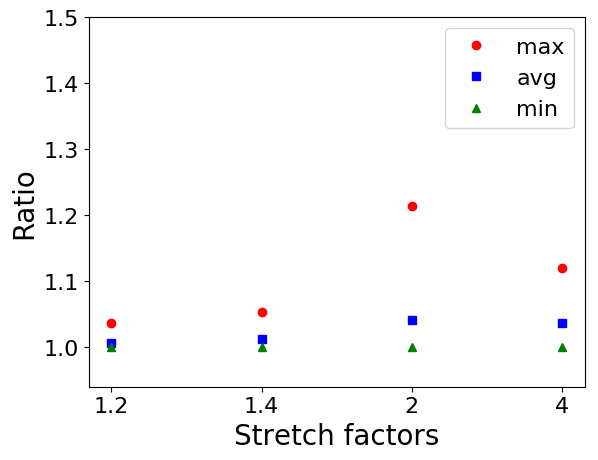}
        \caption{min(BU, TD)}
    \end{subfigure}
    \caption{Performance without oracle on Watts--Strogatz graphs w.r.t.\ the stretch factors} \label{approx_SVR_WS}
\end{figure}

\begin{figure}[htp]
    \centering
    \begin{subfigure}[b]{0.28\textwidth}
        \includegraphics[width=\textwidth]{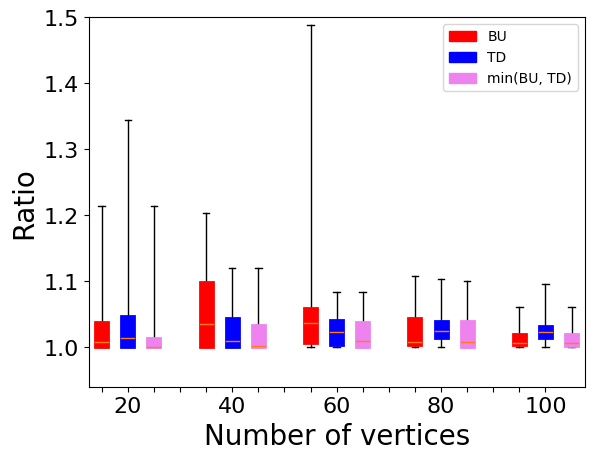}
    \end{subfigure}
    ~
    \begin{subfigure}[b]{0.28\textwidth}
        \includegraphics[width=\textwidth]{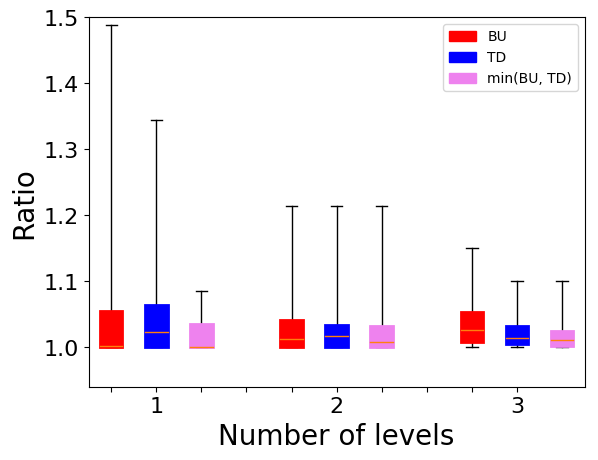}
    \end{subfigure}
    ~
    \begin{subfigure}[b]{0.28\textwidth}
        \includegraphics[width=\textwidth]{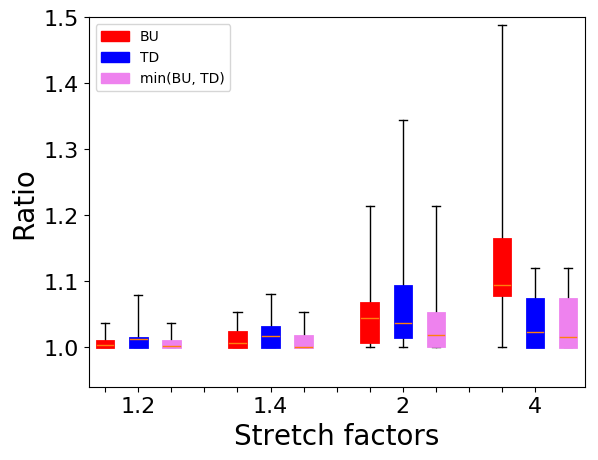}
    \end{subfigure}
    \caption{Performance without oracle on Watts--Strogatz graphs w.r.t.\ the number of vertices, the number of levels, and the stretch factors} \label{approx_box_WS}
\end{figure}

\section{Experimental results on large graphs using Erd\H{o}s-R\'{e}nyi} \label{apdx:ER-large}
Figure \ref{consistent_approx_box_ER} shows a rough measure of performance for the bottom-up and top-down heuristics on large graphs using the Erd\H{o}s-R\'{e}nyi model, where the ratio is defined as the BU or TD cost divided by min(BU, TD). Figure \ref{consistent_time_approx_box_ER} shows the aggregated running times per instance, which significantly worsen as $|V|$ is large.
\begin{figure}[htp]
    \centering
    \begin{subfigure}[b]{0.28\textwidth}
        \includegraphics[width=\textwidth]{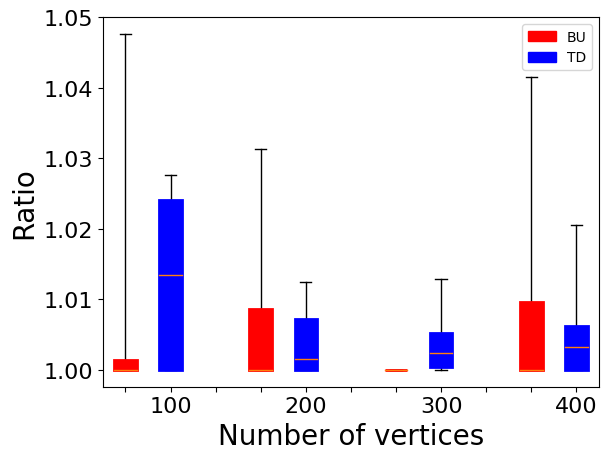}
    \end{subfigure}
    ~
    \begin{subfigure}[b]{0.28\textwidth}
        \includegraphics[width=\textwidth]{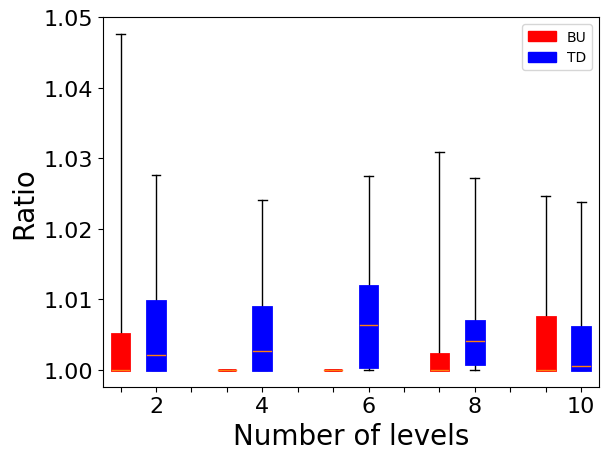}
    \end{subfigure}
    ~
    \begin{subfigure}[b]{0.28\textwidth}
        \includegraphics[width=\textwidth]{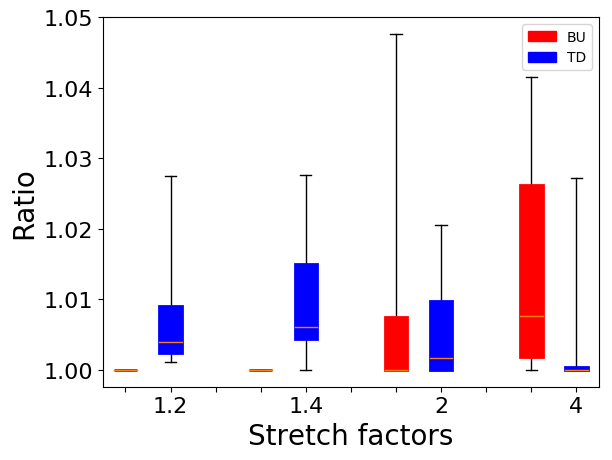}
    \end{subfigure}
    \caption{Performance of heuristic bottom-up and top-down on large Erd\H{o}s--R{\'e}nyi graphs w.r.t.\ the number of vertices, the number of levels, and the stretch factors. The ratio is determined by dividing the objective value of the combined (min(BU, TD)) heuristic.} \label{consistent_approx_box_ER}
\end{figure}

\begin{figure}[htp]
    \centering
    \begin{subfigure}[b]{0.28\textwidth}
        \includegraphics[width=\textwidth]{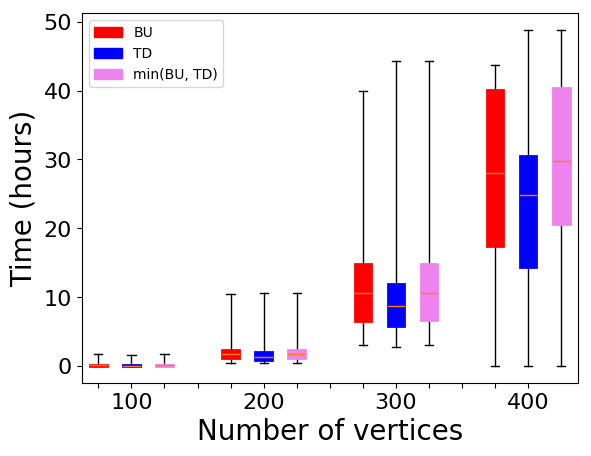}
    \end{subfigure}
    ~
    \begin{subfigure}[b]{0.28\textwidth}
        \includegraphics[width=\textwidth]{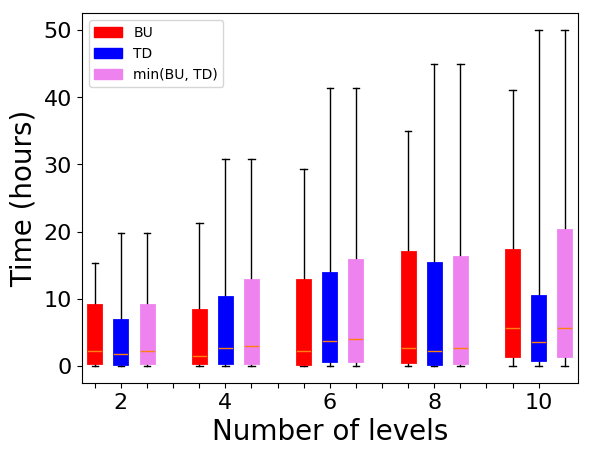}
    \end{subfigure}
    ~
    \begin{subfigure}[b]{0.28\textwidth}
        \includegraphics[width=\textwidth]{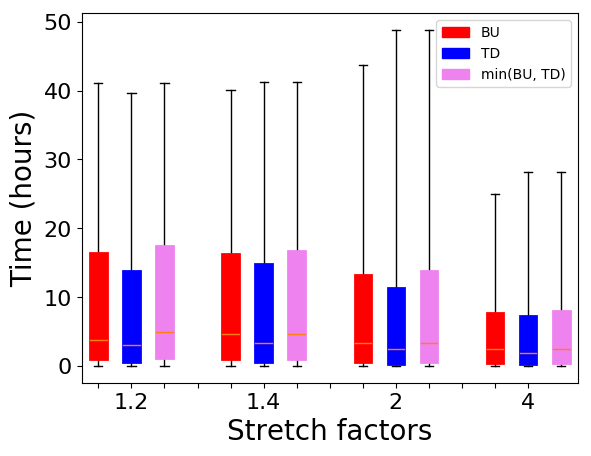}
    \end{subfigure}
    \caption{Experimental running times for computing heuristic bottom-up, top-down and combined solutions on large Erd\H{o}s--R{\'e}nyi graphs w.r.t.\ the number of vertices, the number of levels, and the stretch factors.} \label{consistent_time_approx_box_ER}
\end{figure}
\end{document}